\newtheorem{theorem}{Theorem}[section]
\newtheorem{proposition}[theorem]{Proposition}
\newtheorem{lemma}[theorem]{Lemma}
\newtheorem{corollary}[theorem]{Corollary}
\newtheorem{conjecture}[theorem]{Conjecture}
\newtheorem{definition}[theorem]{Definition}
\begin{document}
	
\title{
	Finding and counting small tournaments in large tournaments
}
	
\author{
	Raphael Yuster
	\thanks{Department of Mathematics, University of Haifa, Haifa 3498838, Israel. Email: raphael.yuster@gmail.com\;.}
}
	
\date{}
	
\maketitle
	
\setcounter{page}{1}
	
\begin{abstract}
	We present new algorithms for counting and detecting small tournaments in a given tournament.
	In particular, it is proved that every tournament on four vertices (there are four) can be detected in $O(n^2)$ time and counted in $O(n^\omega)$ time where $\omega < 2.373$ is the matrix multiplication exponent. It is also proved that any tournament on five vertices (there are $12$) can be counted in $O(n^{\omega+1})$ time. As for lower-bounds, we prove that for almost all $k$-vertex tournaments, the complexity of the detection problem is not easier than the complexity of the corresponding well-studied counting problem for {\em undirected cliques} of order $k-O(\log k)$.
	
\vspace*{3mm}
\noindent
{\bf 2012 ACM Subject Classification:} Theory of computation $\rightarrow$ Design and analysis of algorithms $\rightarrow$ Graph algorithms analysis\\
{\bf Keywords:} tournament; counting; detection

\end{abstract}

\section{Introduction}

Tournaments, which are orientations of complete graphs, are well-studied objects in combinatorics, algorithmic graph theory, 
and computational social choice; see \cite{BG-2008} for a large body of references in all of these areas.
This paper considers the basic problem of detecting and counting small tournaments in larger tournaments.
Detecting and counting small graphs in larger graphs are major topics of research in
algorithmic graph theory (see Subsection \ref{s:related}) and our goal is to investigate these problems
from the tournament perspective.

We usually denote small (fixed size) tournaments by $T$, and large tournaments (problem instances) by $G$. The unique transitive tournament on $k$ vertices is denoted by $T_k$.
For a vertex $v$ in a tournament $G$, $d_G^+(v)$ denotes its out-degree and $d_G^-(v)$ denotes its in-degree;
the subscript is omitted if clear from context. Our main problems are:

\vspace*{5pt}
\noindent
$T$-{\bf DETECT}. For a fixed tournament $T$, determine if an input tournament contains a copy of $T$ as a subgraph.

\vspace*{5pt}
\noindent
$T$-{\bf COUNT}. For a fixed tournament $T$, count the number of copies of $T$ in an input tournament.

\vspace*{5pt}
For a tournament $T$, let $c(T)$ denote the infimum over all reals $t$ such that $T$-COUNT can be computed in $O(n^t)$ time and let $d(T)$ denote the infimum over all reals $t$ such that $T$-DETECT can be solved in $O(n^t)$ time. Let $c(k)$ (resp. $d(k)$) be the maximum of $c(T)$ (resp. $d(T)$) ranging over all $k$-vertex tournaments. Finally, let $c^*(k)$ and $d^*(k)$ denote the corresponding exponents for the counting and detection problems of $K_k$ (the complete graph on $k$ vertices) in undirected graphs.

Some basic observations regarding these parameters follow. Clearly, we must have $d(k) \le c(k)$ and $d^*(k) \le c^*(k)$.
It is well-known \cite{stearns-1959} that in any tournament, every set of $2^{k-1}$ vertices contains
a copy of $T_k$. We therefore have that $T_k$-DETECT can be solved in constant time and hence
$d(T_k)=0$. It is also a simple exercise that the number of $T_3$ is a given tournament $G$
can be obtained by $\sum_{v \in V(G)}\binom{d^+(v)}{2}$. It immediately follows that
$c(T_3) \le 2$ and, in fact, $c(T_3)=2$ as one must read the entire input to determine
the number of $T_3$. From this, it follows that $c(C_3)=2$ where $C_3$ is the directed triangle,
and also $d(C_3)=2$ since it is easy to see that one must read the entire input to determine if a given tournament $G$ is transitive. As $T_3$ and $C_3$ are the only tournaments on three vertices, it follows that $d(3)=c(3)=2$. On the other hand, it is
highly plausible that $d^*(3)=c^*(3)=\omega$ (see Itai and Rodeh \cite{IR-1978} and Vassilevska Williams and Williams \cite{VW-2018} for additional details) where $\omega$ is the matrix multiplication exponent, known to satisfy $2 \le \omega < 2.373$ \cite{AV-2021}.
Note that if $\omega > 2$ then the hypothesis implies that $d^*(3) > c(3)$, i.e., detecting undirected
triangles in graphs is harder than counting any tournament on three vertices in a given tournament.

For $k \ge 4$, the relationship between the parameters $d(k)$, $c(k)$, $d^*(k)$ and $c^*(k)$ (apart from the trivial $d(k) \le c(k)$ and $d^*(k) \le c^*(k)$) is far less obvious. The results in this paper shed light on these relations and establish nontrivial upper bounds for $d(T)$ and $c(T)$ for some small tournaments.

\subsection{Our results}
The first main result consists of two reductions between these parameters.
\begin{theorem}\label{t:1}
	Let $k \ge 3$. Then:\\
	(1) $d(k) \le d^*(k)$.\\
	(2) $d^*(k-O(\log k)) \le  c(k)$. In fact, as $k$ goes to infinity, almost all tournaments $T$ on $k$ vertices satisfy $c(T) \ge d^*(k-O(\log k))$.
\end{theorem}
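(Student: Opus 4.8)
The plan is to treat the two parts separately. For part (1) I would give a direct gadget reduction from $T$-DETECT to $K_k$-DETECT. Fix a $k$-vertex tournament $T$ with $V(T)=\{1,\dots,k\}$ and let $G$ be the input tournament on $n$ vertices. Build a $k$-partite undirected graph $H$ on $V_1\sqcup\cdots\sqcup V_k$, where each $V_i$ is a private copy of $V(G)$. For $u\in V_i$ and $w\in V_j$ with $i\ne j$, place the edge $\{u,w\}$ in $H$ exactly when the arc of $G$ between the underlying vertices agrees with the arc of $T$ between $i$ and $j$, and place no edges inside any $V_i$. A $k$-clique of $H$ must use one vertex per part, and since $G$ is a tournament (so a vertex has no arc to itself) such a clique cannot reuse a vertex of $G$; hence the $k$-cliques of $H$ are precisely the labelled copies of $T$ in $G$. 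As $H$ has $kn=O(n)$ vertices, running a $K_k$-detection algorithm on $H$ solves $T$-DETECT within the same exponent, giving $d(T)\le d^*(k)$ for every $T$ and thus $d(k)\le d^*(k)$.

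For part (2) the goal is a reduction from $K_m$-DETECT on undirected graphs, with $m=k-O(\log k)$, to $T$-COUNT, valid for almost all $T$. The engine is a combinatorial \emph{signature lemma}. For a set $B\subseteq V(T)$ and a vertex $v$, let the signature of $v$ be the vector recording, for each $\beta\in B$, the direction of the arc between $v$ and $\beta$. I would show that if $T$ is a uniformly random tournament on $k$ vertices and $B$ is a fixed set of size $b=O(\log k)$, then with probability $1-o(1)$ the $m=k-b$ vertices of $W:=V(T)\setminus B$ receive pairwise distinct signatures: each such signature is essentially a uniform vector in $\{+,-\}^b$, so taking $2^b\ge k^{2}$, a union bound over the $\binom{k}{2}$ pairs makes a collision unlikely. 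This identifies $W$ with the $m$ ``roles'' that will carry the clique.

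Given an undirected instance $G'$ on $N$ vertices, I would build a host tournament $G$ as follows. First place a gadget $\hat B=\{\beta_1,\dots,\beta_b\}$ carrying a labelled copy of $T[B]$. Then take $m$ disjoint layers $U_1,\dots,U_m$, each a copy of $V(G')$, where layer $U_i$ is associated with role $w_i$: every vertex of $U_i$ is joined to $\hat B$ with exactly the signature $\mathrm{sig}(w_i)$. For $i\ne i'$, orient the arc between the copies of $u,u'\in V(G')$ in $U_i,U_{i'}$ to \emph{agree} with the $T$-arc $w_iw_{i'}$ when $uu'\in E(G')$ and to be its \emph{reverse} otherwise; orient arcs inside a layer arbitrarily. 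With this encoding, a copy of $T$ whose $B$-roles land on $\hat B$ must send each $w_i$ into $U_i$ (by distinctness of signatures), and then validity of all cross-layer arcs is equivalent to the chosen $m$ vertices forming a clique in $G'$ (a non-edge forces a reversed, non-matching arc, which also rules out reusing a $G'$-vertex across layers). Hence these ``intended'' copies are in a fixed finite-to-one correspondence (a factor $m!\cdot|\mathrm{Aut}(T[B])|$) with the $m$-cliques of $G'$, so their number is positive iff $G'$ contains $K_m$.

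The main obstacle is \emph{rigidity}: ruling out spurious copies of $T$ in which some $B$-role is mapped to a layer vertex (equivalently, some $W$-role to a gadget vertex, since the two counts must coincide as $|\hat B|=b$). I would control these by strengthening the signature lemma so that, with probability $1-o(1)$, no pattern $\mathrm{sig}(w_i)$ can masquerade as the arc-pattern required of a $B$-role against any admissible placement of the remaining roles, and so that the $B$-roles are themselves distinguished by their arcs to $W$; both are statements that certain $O(\log k)$-length random vectors avoid a bounded list of forbidden equalities, which holds after enlarging $b$ by a constant factor. Granting this, every copy of $T$ in $G$ is an intended one, so $T$-COUNT on $G$ exactly recovers whether $G'$ has a $K_m$. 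Since $G$ has $b+mN=O(N)$ vertices for fixed $k$, an $O(n^{t})$ counting algorithm yields $K_m$-detection in time $O(N^{t})$, giving $d^*(m)\le c(T)$ for every $T$ satisfying the signature lemma. As these $T$ form a $1-o(1)$ fraction of all $k$-vertex tournaments, this proves the ``almost all'' assertion and, taking the maximum over $T$, the bound $d^*(k-O(\log k))\le c(k)$.
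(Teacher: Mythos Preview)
Your argument for part~(1) is correct and in fact cleaner than the paper's: instead of a random $k$-colouring followed by derandomisation via almost $k$-wise independence, you take $k$ disjoint copies of $V(G)$ and get a deterministic reduction immediately. The only cost is that your auxiliary graph has $kn$ vertices rather than $n$, which is harmless for the exponent. One small remark: you should say explicitly how you handle the pair consisting of the copy of $u$ in $V_i$ and the copy of the \emph{same} $u$ in $V_j$; you do say it, but it deserves emphasis, since this is what guarantees injectivity of the underlying map.

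Part~(2), however, has a real gap. Your plan is to build a host tournament $G$ and argue that \emph{every} copy of $T$ in $G$ is ``intended'' (i.e.\ places the $B$-roles on $\hat B$ and the $W$-roles one-per-layer), so that the raw value of $T$-COUNT decides $K_m$-existence. But nothing in your construction prevents copies of $T$ that miss $\hat B$ entirely: you orient edges inside each layer arbitrarily, and edges between layers are governed by $G'$, so there is no control whatsoever over $k$-tuples of layer vertices. Your rigidity discussion only addresses the case ``some $B$-role lands on a layer vertex'', and your parenthetical claim that this is equivalent to ``some $W$-role lands on a gadget vertex'' is false---a copy using zero vertices of $\hat B$ sends all $b$ of its $B$-roles to layers while sending no $W$-role to the gadget. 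No strengthening of a signature lemma about arc-patterns towards $\hat B$ can help here, because such a copy is not constrained by $\hat B$ at all.

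The paper fixes exactly this issue with an inclusion--exclusion step (its Lemma~\ref{l:partition}): rather than counting all copies of $T$, it counts copies with exactly one vertex in each part of a $k$-partition (the $b$ singleton parts $\{\beta_i\}$ together with the $m$ layers), and this quantity is a signed sum of $2^k$ instances of $T$-COUNT on induced subtournaments. Once you restrict to such ``colourful'' copies, $\hat B$ is forced to be used, and then the signature property does the rest. The paper's signature notion is also stronger than yours: it requires not just distinct arc-vectors from $W$ to $B$, but that no reorientation of edges with both endpoints outside $B$ can produce an isomorphic copy of $T$; establishing this needs, in addition to your (P2)-type distinctness, that the only automorphism of $T[B]$ extendable to the arc-pattern towards $W$ is the identity (the paper's property (P1)), and the proof of that is more than ``a bounded list of forbidden equalities''. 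If you insert the inclusion--exclusion step and upgrade your signature lemma to the paper's, your construction becomes essentially the same as the paper's Lemma~\ref{l:reduction}.
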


Before stating our next results, we recall (one of) the definitions of the matrix multiplication exponents. For positive integers $a,b,c$ let $\omega(a,b,c)$ denote the infimum over all reals $t$ such that
any $O(n^a) \times O(n^b)$ matrix can be multiplied with an $O(n^b) \times O(n^c)$ matrix using
$O(n^t)$ field operations. It is well-known that the function $\omega(a,b,c)$ is symmetric.
If the underlying field is finite and its elements can be represented using $O(\log n)$ bits, then field operations translate to (usual) runtime.
Thus, a standard reduction shows that the product of two {\em boolean} matrices with dimensions as above can
be computed in $O(n^t)$ time. The special case $\omega(1,1,1)$ is called {\em the matrix multiplication exponent} and is simply denoted by $\omega$. As mentioned earlier, it is known that $\omega < 2.373$
and clearly $\omega \ge 2$, as trivially $\omega(a,b,c) \ge \max \{a+b,b+c,a+c\}$.

Extending the result of Itai and Rodeh \cite{IR-1978} who proved that $c^*(3) \le \omega$, Ne\v{s}et\v{r}il and Poljak \cite{NP-1985} and Eisenbrand and Grandoni \cite{EG-2004} proved that
$c^*(k) \le \omega(\lfloor k/3 \rfloor,\lceil(k-1)/3\rceil,\lceil k/3 \rceil)$.
In fact, no better bound is known for any $k$, even for $d^*(k)$, and it is conceivable that $\omega(\lfloor k/3 \rfloor,\lceil(k-1)/3\rceil,\lceil k/3 \rceil)$ equals
both $c^*(k)$ and $d^*(k)$ for all $k$.
As shown in Section \ref{sec:reduction}, it is easy to adapt the arguments of \cite{EG-2004,NP-1985} to the tournament setting and obtain that $c(k) \le \omega(\lfloor k/3 \rfloor,\lceil(k-1)/3\rceil,\lceil k/3 \rceil)$.
While this does not imply that $c(k) \le c^*(k)$, we  conjecture that the latter inequality holds,
i.e., that counting any given $k$-vertex tournament is not harder than counting $K_k$.
\begin{conjecture}\label{conj:1}
	For all $k \ge 3$ it holds that $c(k) \le c^*(k)$.
\end{conjecture}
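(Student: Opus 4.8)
The plan is to prove the conjecture by a reduction, of only constant-factor blow-up in the instance size, from $T$-COUNT to $K_k$-COUNT. Concretely, I want to show that any algorithm counting $K_k$ in an $N$-vertex undirected graph in time $O(N^t)$ can be turned into an algorithm for $T$-COUNT on an $n$-vertex tournament running in time $O(n^t)$. Taking the infimum over admissible $t$ then gives $c(T)\le c^*(k)$ for every $k$-vertex tournament $T$, and maximizing over all such $T$ gives $c(k)\le c^*(k)$.

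First I would fix a $k$-vertex tournament $T$ and label its vertices $1,\dots,k$. Given an input tournament $G$ on $n$ vertices, I build a $k$-partite undirected graph $H$ on the vertex set $V(G)\times\{1,\dots,k\}$, whose $i$-th color class is $V(G)\times\{i\}$. For $i\neq j$ I join $(u,i)$ to $(w,j)$ exactly when $u\neq w$ and the arc of $G$ between $u$ and $w$ is oriented in the same direction as the arc of $T$ between $i$ and $j$ (that is, $u\to w$ in $G$ iff $i\to j$ in $T$), placing no edges inside any color class. This $H$ has $kn=O(n)$ vertices and is built in $O(n^2)$ time. The point of the construction is that, with no intra-class edges, every $k$-clique of $H$ is forced to be rainbow, meeting each color class in a single vertex $(v_i,i)$; such a transversal is a clique precisely when the $v_i$ are distinct and $i\mapsto v_i$ preserves all arc orientations, i.e.\ is an embedding of $T$ into $G$. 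Since each unlabeled copy of $T$ in $G$ is the image of exactly $|\mathrm{Aut}(T)|$ such embeddings, the number of $k$-cliques of $H$ equals $|\mathrm{Aut}(T)|$ times the number of copies of $T$ in $G$; as $|\mathrm{Aut}(T)|$ is a positive constant depending only on $T$, I recover $T$-COUNT by counting $k$-cliques in $H$ and dividing.

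The step I expect to be the main obstacle is arguing that the clique-counting call is truly governed by the general exponent $c^*(k)$ rather than by something larger. This rests on the observation that, because $H$ has no edges inside a color class, the rainbow-clique count equals the total $K_k$-count, so counting $K_k$ in $H$ is an honest instance of the uncolored $K_k$-counting problem and is solvable in $O((kn)^{c^*(k)+\epsilon})=O(n^{c^*(k)+\epsilon})$ time with no color-coding or inclusion-exclusion overhead. I regard this as the crux precisely because it is where one must avoid the weaker route of feeding $H$ to the matrix-multiplication algorithm of \cite{EG-2004,NP-1985}, which would only reprove $c(k)\le\omega(\lfloor k/3\rfloor,\lceil(k-1)/3\rceil,\lceil k/3\rceil)$: the whole value of the reduction is that it should transfer \emph{any} future improvement in $K_k$-counting to tournament counting, and the zero-intra-class-edge design is what is meant to make that transfer immediate. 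The remaining points --- that the orientation-matching rule is well defined for every pair of classes, and that the division by $|\mathrm{Aut}(T)|$ is exact --- I expect to be routine bookkeeping.
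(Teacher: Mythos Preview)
The paper does not prove this statement: it is posed explicitly as an open conjecture, with the surrounding discussion noting only that the bound $c(k)\le\omega(\lfloor k/3\rfloor,\lceil(k-1)/3\rceil,\lceil k/3\rceil)$ can be established for tournaments by adapting~\cite{EG-2004,NP-1985}, while remarking that ``this does not imply that $c(k)\le c^*(k)$''. There is thus no proof in the paper to compare against; your proposal is, in effect, a claimed resolution of Conjecture~\ref{conj:1}.

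As written, your argument appears to be sound. The graph $H$ on $V(G)\times[k]$ has no edges inside any color class, so every $K_k$ in $H$ is necessarily a transversal $\{(v_1,1),\dots,(v_k,k)\}$; the pairwise adjacency condition then says exactly that the $v_i$ are distinct and that $i\mapsto v_i$ preserves every arc direction, i.e.\ is a labeled embedding of $T$ into $G$. Hence the total $K_k$-count in $H$ equals $|\mathrm{Aut}(T)|$ times the number of copies of $T$ in $G$, and division by this constant recovers $T$-COUNT. Since $H$ has $kn=O(n)$ vertices, is built in $O(n^2)$ time, and $c^*(k)\ge 2$, feeding $H$ to any $K_k$-counting algorithm gives $c(T)\le c^*(k)$. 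The construction is precisely the counting analogue of the paper's Lemma~\ref{l:1}: where that lemma randomly partitions $V(G)$ into $k$ parts (enough for detection, since one surviving copy suffices), you instead place every vertex in every part, which preserves the exact count with no color-coding or inclusion--exclusion overhead. I see no gap; if the author had a specific obstruction in mind, it is not stated in the paper.
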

In fact, we conjecture an even stronger statement; that at some point, the two parameters coincide.
\begin{conjecture}\label{conj:2}
	For all sufficiently large $k$ it holds that $c(k) = c^*(k)$.
\end{conjecture}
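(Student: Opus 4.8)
The plan is to establish the two inequalities $c(k)\le c^*(k)$ and $c(k)\ge c^*(k)$ separately, the first being Conjecture~\ref{conj:1}, and then to argue that for large $k$ neither can leave any slack. For the upper bound I would treat an optimal $K_k$-counting algorithm as a black box rather than adapting the algebraic identities of \cite{EG-2004,NP-1985}. Fix a $k$-vertex tournament $T$ on $[k]$ and an input tournament $G$ on a vertex set $V$, and build an undirected $k$-partite graph $G'$ on $[k]\times V$: for $i\ne j$ and $u\ne v$ join $(i,u)$ to $(j,v)$ exactly when the orientation of the pair $uv$ in $G$ agrees with the orientation of the pair $ij$ in $T$, leaving all pairs with $u=v$ or $i=j$ non-adjacent. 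Since $G'$ is $k$-partite with exactly $k$ parts, every $K_k$ uses one vertex per part and automatically distinct $G$-vertices, so the copies of $K_k$ in $G'$ are in bijection with the orientation-preserving injections $[k]\to V$, that is, with the labeled copies of $T$ in $G$. Counting $K_k$ in the $O(n)$-vertex graph $G'$ and dividing by $|\mathrm{Aut}(T)|$ then returns the number of copies of $T$ in $G$, and invoking a $K_k$-counting algorithm of exponent $c^*(k)+\epsilon$ gives $c(T)\le c^*(k)$ for every $T$, hence $c(k)\le c^*(k)$.

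The substance is the matching lower bound $c(k)\ge c^*(k)$, for which Theorem~\ref{t:1}(2) is not enough: it only delivers $c(k)\ge d^*(k-O(\log k))$, and under the belief that $c^*(m)$ grows like $\omega m/3$ a shift of $O(\log k)$ vertices already costs $\Theta(\log k)$ in the exponent. I would therefore aim to produce, for every sufficiently large $k$, a single $k$-vertex tournament $T$ together with a \emph{lossless} reduction from $K_k$-COUNT on an undirected graph $H$ to $T$-COUNT on a tournament $G$ with $|V(G)|=O(|V(H)|)$ and no shrinking of the pattern. The first step is to upgrade the reduction underlying Theorem~\ref{t:1}(2) from detection to counting, making it \emph{parsimonious} so that the number of copies of $T$ in $G$ is a fixed linear function of the number of $k$-cliques of $H$; this is what replaces $d^*$ by $c^*$. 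The second and decisive step is to remove the $O(\log k)$ vertices that the present argument spends on encoding the non-edges of $H$. Because every pair in a tournament carries an orientation, a non-edge cannot be made unmatchable inside a single instance, so in place of a gadget I would express the clique count as a signed combination over a constant (in $k$) number of tournaments that differ only in the orientations assigned to the non-edge pairs, arranged so that every non-clique $k$-set cancels while every clique $k$-set survives.

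I expect this cancellation to be the main obstacle, and it is exactly where the hypothesis of sufficiently large $k$ should enter: for large $k$ almost every tournament $T$ is rigid and pseudorandom enough that one distinguished pair of positions of $T$ can serve as the sole toggle recording edge versus non-edge, while the asymmetry of the remaining structure of $T$ blocks spurious copies from routing through the toggled pairs. Arranging this to be simultaneously lossless, so that no auxiliary vertices are introduced, and count-exact, so that no residual non-clique contribution survives the signed sum, is the step I would expect to resist a routine argument and to require genuinely new combinatorial input. Granting it, one obtains a $k$-vertex tournament with $c(T)\ge c^*(k)$ and hence $c(k)\ge c^*(k)$; together with the black-box upper bound this forces $c(k)=c^*(k)$ for all sufficiently large $k$, as asserted.
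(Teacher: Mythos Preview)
The statement you are addressing is Conjecture~\ref{conj:2}, which the paper explicitly leaves open; there is no proof in the paper to compare against. Your task was therefore to supply a proof the author does not have, and your proposal does not do so.

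Your upper-bound argument is correct and is worth noting: the $k$-partite blow-up $G'$ on $[k]\times V$ does place labeled copies of $T$ in $G$ in bijection with copies of $K_k$ in $G'$, and since $|V(G')|=kn=O(n)$, any $K_k$-counting algorithm of exponent $c^*(k)+\epsilon$ yields $c(T)\le c^*(k)$ for every $k$-vertex $T$. This in fact resolves Conjecture~\ref{conj:1}, which the paper also states as open; the author adapts the Ne\v{s}et\v{r}il--Poljak/Eisenbrand--Grandoni argument to get $c(k)\le \omega(\lfloor k/3\rfloor,\lceil(k-1)/3\rceil,\lceil k/3\rceil)$ but does not observe this direct black-box reduction.

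The lower bound, however, is where the conjecture lives, and here your write-up is a plan rather than a proof. You correctly identify that Theorem~\ref{t:1}(2) loses $O(\log k)$ vertices and only reaches $d^*$ rather than $c^*$, and you propose to repair both defects via a signed combination of tournament instances that cancels non-clique contributions. But you then write that this cancellation ``is the step I would expect to resist a routine argument and to require genuinely new combinatorial input,'' and proceed with ``Granting it, \dots''. That is the entire content of the lower bound, and you have granted it rather than proved it. No mechanism is given for why rigidity or pseudorandomness of $T$ should make a single toggled pair encode edge/non-edge without creating spurious copies, nor why a bounded-size signed sum should achieve exact cancellation over all $\binom{k}{2}$ non-edge positions simultaneously. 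As it stands, the proposal establishes $c(k)\le c^*(k)$ but leaves $c(k)\ge c^*(k)$ as open as it is in the paper.
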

Given the discussion above, and given that $c(3)=d(3)=2$, it is of interest to investigate the first nontrivial
cases, starting with $k=4$.
\begin{theorem}\label{t:four}
	$d(4) = 2$ and $c(4) \le \omega$.
\end{theorem}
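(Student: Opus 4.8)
The four tournaments on four vertices are the transitive tournament $T_4$; the tournament $D$ in which one vertex dominates a directed triangle; its reversal $D'$, in which a directed triangle dominates a single vertex; and the unique strongly connected tournament $W$ on four vertices (score sequence $(1,1,2,2)$). I would treat the detection claim $d(4)=2$ and the counting claim $c(4)\le\omega$ separately, in each case handling these four tournaments one at a time.

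For the detection upper bound, note first that $d(T_4)=0$ by Stearns' result, so only the three non-transitive tournaments need work. Four vertices induce $W$ exactly when they induce a strongly connected tournament, since $W$ is the unique strong tournament on four vertices; such a quadruple lies in a single strong component of $G$, and conversely, by Moon's theorem every strong tournament on at least four vertices has a directed $4$-cycle whose vertices induce a strong---hence $W$---subtournament. Thus $G$ contains $W$ iff some strong component has at least four vertices, which is decidable from the condensation in $O(n^2)$ time. Since $D'$ is the reversal of $D$, it suffices to detect $D$, and $G$ contains $D$ iff some out-neighborhood $N^+(v)$ is non-transitive, i.e.\ iff some directed triangle has a common dominator. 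Passing to the condensation $S_1\to\cdots\to S_m$, any triangle inside a non-source component of size $\ge 3$ is dominated by any earlier vertex; the only case not immediately read off from the component structure is whether the source strong component contains $D$ internally.

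For the detection lower bound I would reduce $C_3$-DETECT (which satisfies $d(C_3)=2$, as deciding transitivity forces reading essentially all $\binom n2$ edges) to $D$-DETECT. Given an $n$-vertex instance $G$, adjoin a vertex $s$ with $s\to v$ for all $v$; the result contains $D$ iff $G$ contains a $C_3$, because $s$ dominates every triangle of $G$ while, having in-degree $0$, it lies in none. Hence $d(D)\ge d(C_3)=2$, so $d(4)\ge 2$, and with the upper bounds this gives $d(4)=2$. For the counting bound, the plan is to reduce all four counts to one matrix product. With $A$ the $\{0,1\}$ adjacency matrix and $P=A^2$, the entry $P_{st}$ counts vertices $k$ with $s\to k\to t$. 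A transitive quadruple has a unique source $s$ and sink $t$ with $s\to t$, and its two middle vertices are an arbitrary pair from $\{k:s\to k\to t\}$; hence the number of copies of $T_4$ is $\sum_{s\to t}\binom{P_{st}}{2}$, computable in $O(n^\omega)$ time. The remaining counts follow by linear algebra from $O(n^2)$-computable data: counting pairs (vertex, dominated triple) gives $\sum_v\binom{d^+(v)}{3}=n_{T_4}+n_D$ and $\sum_v\binom{d^-(v)}{3}=n_{T_4}+n_{D'}$ (a dominated triple is transitive, giving $T_4$, or cyclic, giving $D$), while $n_{T_4}+n_D+n_{D'}+n_W=\binom n4$; solving for $n_D,n_{D'},n_W$ completes the counting in $O(n^\omega)$.

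The main obstacle is the one detection case left open above: deciding in $O(n^2)$ time whether a strongly connected tournament $H$ contains $D$, equivalently whether some out-neighborhood of $H$ is non-transitive. This cannot be settled from the size of $H$ alone---the regular tournament on five vertices is strong and $D$-free, whereas the Paley tournament on seven vertices is strong and contains $D$---and testing each out-neighborhood for a triangle directly costs $\Theta(\sum_v d^+(v)^2)$, which is $\Theta(n^3)$ for near-regular tournaments. Overcoming this requires a structural characterization of the $D$-free tournaments that exposes a witness, or certifies its absence, in $O(n^2)$ time; this is the step I expect to require the most care.
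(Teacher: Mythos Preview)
Your counting argument is correct and close in spirit to the paper's: you compute $\#T_4$ from one matrix square (the paper uses $\sum_{(u,v)\in E}\binom{d^+(u,v)}{2}$ via $A^+(A^+)^T$ rather than your $\sum_{s\to t}\binom{(A^2)_{st}}{2}$, but both are valid), and you recover the remaining three counts from $O(n^2)$-computable degree data, whereas the paper writes a fourth matrix-product-based equation; your route is if anything simpler. Your treatment of $W$-DETECT matches the paper's (strong components plus pancyclicity), and your lower bound $d(D)\ge 2$ via a reduction from $C_3$-DETECT is correct and different from the paper's direct adversary argument (if at least $n+1$ edges are unexamined, the unexamined pairs contain a path of length~$3$, which can always be oriented to complete a $D$).

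The genuine gap is precisely where you flag it: the $O(n^2)$ upper bound for $D$-DETECT. You reduce to the source strong component but do not resolve that case, and this is not a minor detail---the paper devotes most of Section~4 to it. The paper's solution is a structural characterization of $D$-free tournaments (Lemma~4.2, attributed to Gishboliner): fix any directed triangle $\{a,b,c\}$ and partition $V(G)\setminus\{a,b,c\}$ into the eight sets $N_S$, $S\subseteq\{a,b,c\}$, according to which of $a,b,c$ each vertex dominates. Then $G$ is $D$-free iff seven explicit conditions hold: $N_{\{a,b,c\}}=\emptyset$; each $N_S$ is transitive; several specified ``$N_S\rightarrow N_{S'}$'' complete-domination relations hold; and three designated pairs $\{N_S,N_{S'}\}$ contain no ``bad triple'' (a $T_3$ whose source and sink lie in one set and whose middle vertex lies in the other). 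All but the last condition are trivially checkable in $O(n^2)$. For the bad-triple condition, the paper exploits that both sets in each pair are already known (from condition~(3)) to be transitive: sweeping through the transitive order $x_1,\ldots,x_p$ of one set while maintaining, for each vertex of the other set, its in-neighbour count to the current prefix and out-neighbour count to the current suffix, detects a bad triple in $O(pq)$ time. This is exactly the ``structural characterization that exposes a witness, or certifies its absence, in $O(n^2)$ time'' you anticipate needing; without supplying it, the proof of $d(D)\le 2$---and hence of $d(4)=2$---is incomplete.
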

We note  that if $\omega > 2$ and if the inequality in Theorem \ref{t:four} is an equality, then $d(4) < c(4)$ (finding is easier than counting for four-vertex tournaments). 
Furthermore, assuming $d^*(3)=\omega$, counting each tournament on four vertices is
not more difficult than detecting $K_3$ and, assuming $\omega > 2$,
detecting each tournament on four vertices is easier than detecting $K_3$.
Another consequence of Theorem \ref{t:four} is an $O(n^\omega)$ {\em deterministic} algorithm for deciding
whether a given tournament is {\em quasi-random}; see Chung and Graham \cite{CG-1991} for this important and well-studied notion. It has been proved by Lov\'asz (\cite{lovasz-1993} exercise 10.44(b)) and by Coregliano and Razborov \cite{CR-2017} that tournament quasi-randomness can be determined from the number of $T_4$ it contains.

While it is an easy exercise that $c^*(k) \le 1+c^*(k-1)$ and that
$c(T_k) \le 1+c(T_{k-1})$, it is certainly {\em not obvious} that $c(k) \le 1+c(k-1)$.
\begin{conjecture}\label{conj:3}
	For all $k \ge 3$ it holds that $c(k) \le c(k-1)+1$.
\end{conjecture}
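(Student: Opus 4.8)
The plan is to mimic the easy recursion $c(T_k)\le 1+c(T_{k-1})$ and push it through for the worst-case $k$-vertex tournament. Fix a tournament $T$ on $k$ vertices with $c(T)=c(k)$, pick a vertex $v$ of $T$, and let $T'=T-v$ be the induced $(k-1)$-vertex tournament. Inside $T$ the remaining vertices split into the set $A$ of out-neighbors of $v$ and the set $B$ of its in-neighbors. Every copy of $T$ in the input $G$ arises by choosing the image $u$ of $v$ and then a copy of $T'$ whose $A$-vertices land in the out-neighborhood $N^+(u)$ and whose $B$-vertices land in the in-neighborhood $N^-(u)$. Hence the number of copies of $T$ in $G$ equals $\sum_{u\in V(G)} f_u$, where $f_u$ is this \emph{constrained} count of $T'$ relative to the partition $V(G)\setminus\{u\}=N^+(u)\cup N^-(u)$. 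If each $f_u$ could be evaluated in $O(n^{c(k-1)})$ time, summing over the $n$ choices of $u$ would yield $c(T)\le 1+c(k-1)$, as required.

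The first step is therefore to choose $v$ wisely. When $B=\emptyset$, i.e.\ $v$ beats every other vertex, the constraint is one-sided: $f_u$ is simply the number of copies of $T'$ in the sub-tournament $G[N^+(u)]$, which costs $O(n^{c(T')})\le O(n^{c(k-1)})$, and summing over $u$ settles this case. This is exactly what happens for $T_k$, whose source can be removed. The difficulty is that a general $T$ need have no dominating or dominated vertex, so that for \emph{every} choice of $v$ both $A$ and $B$ are nonempty and $f_u$ is a genuinely two-sided, constrained count.

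The second step is to attack the two-sided count directly. Because $G$ is a tournament, the condition ``$x\in N^-(u)$'' is the complement of the arc $u\to x$, so one can relax the $B$-constraints by inclusion--exclusion and rewrite $\sum_u f_u$ as a signed sum $\sum_{S\subseteq B}(-1)^{|B\setminus S|}\, h_S$, where $h_S$ is the unconstrained homomorphism count of the $k$-vertex digraph $D_S$ obtained from $T'$ by adding $u$ with out-arcs to $A\cup S$ and leaving $u$ non-adjacent to $B\setminus S$. The hope is that each $h_S$ lies within the $O(n^{1+c(k-1)})$ budget, so that the (constant-size) signed sum does as well.

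The main obstacle lies precisely here. The digraphs $D_S$ are \emph{not} tournaments: between $u$ and the relaxed in-neighbors $B\setminus S$ there is no arc at all, so these are tournaments with some arcs deleted, and the parameters $c(k-1)$ and $c(k)$ say nothing about the complexity of counting such incomplete patterns. What one really needs is that \emph{partitioned} tournament counting is no harder than ordinary tournament counting on the same number of vertices—a statement for which no proof is known. Controlling the complexity of the auxiliary counts $h_S$, or else exhibiting, for every $k$-vertex tournament $T$, a vertex $v$ whose removal collapses the constrained problem to an unconstrained $(k-1)$-vertex count, is the gap that keeps Conjecture~\ref{conj:3} open.
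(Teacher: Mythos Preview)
Your proposal is not a proof, and you say so yourself at the end; that honesty is exactly right, because the statement is a \emph{conjecture} in the paper, not a theorem. The paper does not prove it in general. It only records that the inequality holds for $k\le 5$, as a byproduct of Theorem~\ref{t:five} together with $c(4)\le\omega$ and $c(3)=2$.

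Your analysis of the obstacle matches the paper's perspective precisely. The ``easy'' case you isolate---when the chosen vertex $v$ is a source (or, symmetrically, a sink)---is exactly Lemma~\ref{l:count-extension}: if $T$ has a dominating or dominated vertex then $c(T)\le 1+c(k-1)$. The paper uses this lemma to handle six of the twelve five-vertex tournaments, and then has to work \emph{ad hoc} (via the linear-equation system of Table~\ref{table:1} and the special argument for $H_8$) to finish the remaining ones. That ad hoc completion is what certifies the conjecture at $k=5$; no uniform argument for general $k$ is given.

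Your inclusion--exclusion rewriting of the two-sided constraint is reasonable but, as you note, the resulting patterns $D_S$ are digraphs with missing arcs rather than tournaments, so neither $c(k-1)$ nor $c(k)$ controls their counting complexity. The paper's Lemma~\ref{l:partition} does show that \emph{fully partitioned} counting (one part per vertex of $T$) reduces to ordinary $T$-COUNT, but that is a different constraint from the coarse two-part split $N^+(u)\cup N^-(u)$ you need, and it does not help here. So the gap you identify is genuine and is precisely why the paper states this as an open conjecture.
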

A consequence of the following theorem is that Conjecture \ref{conj:3} holds for all $k \le 5$.
\begin{theorem}\label{t:five}
	$c(5) \le \omega+1$.
\end{theorem}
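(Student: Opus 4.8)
The plan is to reduce the counting of any five-vertex tournament to $O(n)$ instances of a \emph{colored} four-vertex counting problem, each solvable in $O(n^\omega)$ time, for a total of $O(n^{\omega+1})$. This matches the inequality $\omega(1,1,2)\le\omega+1$ (obtained by splitting the $n\times n^2$ factor into $n$ blocks and performing $n$ square products), which is exactly the source of the target exponent. Throughout, counting a copy of a fixed tournament $T$ in $G$ means counting five-element subsets whose induced subtournament is isomorphic to $T$, and I will pass through labeled copies: if $\mathrm{inj}(T,G)$ denotes the number of injective maps $V(T)\to V(G)$ preserving all orientations, then the number of copies is $\mathrm{inj}(T,G)/|\mathrm{Aut}(T)|$, and $|\mathrm{Aut}(T)|$ is a constant.

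First I would set up the anchoring. Fix a five-vertex tournament $T$ and an arbitrary vertex $v\in V(T)$; write $T'=T-v$ for the induced four-vertex tournament and let $R=N_T^+(v)$, $S=N_T^-(v)$ be the partition of $V(T')$ into out- and in-neighbors of $v$. Since every labeled copy of $T$ sends $v$ to a unique vertex $u$ of $G$, we have the exact identity
\[
\mathrm{inj}(T,G)=\sum_{u\in V(G)} N_u,
\]
where $N_u$ is the number of labeled copies of $T'$ inside $G-u$ in which each vertex playing a role in $R$ is mapped into $N_G^+(u)$ (the out-neighborhood of $u$) and each vertex playing a role in $S$ is mapped into $N_G^-(u)$. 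Thus it suffices to compute each $N_u$ and sum; there is no overcounting to correct at the labeled level.

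Second, I would establish that for a fixed $u$ the quantity $N_u$ can be computed in $O(n^\omega)$ time. Here $N_u$ is a count of the four-vertex tournament $T'$ whose four vertices are constrained to lie on prescribed sides of the bipartition $A:=N_G^+(u)$, $B:=N_G^-(u)$. I would prove this by the same kind of reduction to matrix multiplication that yields $c(4)\le\omega$ in Theorem \ref{t:four}: after $O(n^2)$ local bookkeeping (out/in-degrees of vertices into each class and counts of the two three-vertex tournaments straddling the cut), the count reduces to a constant number of products of matrices of dimension at most $n\times n$ — essentially computing, for every relevant pair $(x,y)$, the number of directed two-paths $x\to z\to y$ with $z$ in a prescribed class. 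The color constraints are imposed simply by restricting the rows and columns of the adjacency matrices to the classes $A$ and $B$ dictated by $R$ and $S$; the restricted matrices are still at most $n\times n$, so each product still costs $O(n^\omega)$, and $N_u$ is recovered after subtracting the $O(n^2)$-computable lower-order terms that account for non-injective (vertex-repeating) configurations. Summing over all $u$ gives $O(n\cdot n^\omega)=O(n^{\omega+1})$.

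Third, I would check that the scheme covers all twelve five-vertex tournaments. For each $T$ any choice of $v$ is admissible, so it is enough to verify that every colored four-vertex pattern $(T',R,S)$ that can arise falls into the cases handled in the second step; choosing $v$ of extreme out-degree, and using that reversing all arcs pairs the types and halves the analysis, keeps this bounded. The main obstacle is precisely this second step: one must confirm, across the short but nontrivial list of colored four-vertex patterns, that each admits an expression as a bounded number of class-restricted matrix products together with $O(n^2)$-computable local terms, with the injectivity corrections handled correctly. Once that case analysis is in place, summing the per-anchor costs yields $c(T)\le\omega+1$ for every five-vertex $T$, hence $c(5)\le\omega+1$.
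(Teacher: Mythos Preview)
Your plan is correct and takes a genuinely different route from the paper. The paper never reduces to colored four-vertex counts; instead it works directly at the five-vertex level, setting up the twenty linear equations of Table~\ref{table:1} whose right-hand sides are degree-three expressions in $d^\pm(u,v),p(u,v),p(v,u)$. The $20\times 12$ coefficient matrix has rank only $10$, so the system alone is insufficient. The paper then computes seven of the twelve counts independently: the six tournaments possessing a source or sink via the extension $c(T)\le 1+c(4)$ (this is exactly your anchoring in the degenerate monochromatic case), and $\#H_8$ via a dedicated reduction to undirected $4$-cycle counting in the bipartite graph between $N^+(u)$ and $N^-(u)$. With these seven values substituted, the remaining five ($H_4,H_5,H_6,H_7,R_5$) are read off from a full-rank $5\times 5$ subsystem.

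Your approach is cleaner and more uniform, and it gestures toward Conjecture~\ref{conj:3}. The step you correctly flag as the main obstacle does go through, though not for the superficial reason you suggest: merely restricting adjacency matrices to color classes is not enough (a direct $K_4$-pattern contraction costs $n^{\omega(1,1,2)}$, not $n^\omega$); one must again invoke the linear-equation mechanism. Concretely, for any $2$-coloring $c$ of $[4]$ pigeonhole gives a pair $k,\ell$ with $c(k)=c(\ell)$, and for any labeled $T'$ in which $k$ and $\ell$ share identical adjacencies to the remaining two vertices one gets $N_{T',c}=\sum_{(p,q)}\binom{|W(p,q)|}{2}$ directly; every other labeled $N_{T',c}$ with the same $c$ then follows by propagating the relations $N_{T',c}+N_{T'',c}=\sum_{(p,q)}|W_k(p,q)|\cdot|W_\ell(p,q)|$ along the connected single-edge-flip hypercube on the $64$ labeled orientations of $K_4$. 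This is the real content of the proof and is absent from your write-up. The paper trades this uniform colored-four-vertex lemma for explicit five-vertex bookkeeping plus the ad hoc $H_8$ trick; your route avoids the rank-deficiency issue entirely but requires carrying out the colored system at $k=4$.
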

Note that if $d^*(5)=\omega(1,2,2) \ge 4$ then counting each tournament on five vertices (there are $12$
such tournaments) in a given tournament is faster than detecting a $K_5$ in an undirected graph.

The rest of this paper is organized as detailed. Following a subsection on related research, Section \ref{sec:reduction} contains the proof of the reductions yielding Theorem \ref{t:1}. Section \ref{sec:four} considers the case $k=4$ and the proof
of Theorem \ref{t:four}. Section \ref{sec:five} considers the case $k=5$ and the proof of Theorem \ref{t:five}.

\subsection{Related research}\label{s:related}

Detecting and counting (possibly induced) copies of a specified fixed (possibly directed) graph (a.k.a. pattern) in a given host graph is a major topic of research in algorithmic graph theory.
We mention some of the main results in this area that are most relevant to our work.

Itai and Rodeh \cite{IR-1978} proved several decades ago that counting and detection of $K_3$ can be done in $O(n^\omega)$ time, i.e., $c^*(3) \le \omega$. Their method was generalized by
Ne\v{s}et\v{r}il and Poljak \cite{NP-1985} who proved that when $k$ is a multiple of $3$ then
$c^*(k) \le \omega(k/3,k/3,k/3)$ and Eisenbrand and Grandoni \cite{EG-2004} proved that for arbitrary $k \ge 3$, $d^*(k) \le \omega(\lfloor k/3 \rfloor,\lceil(k-1)/3\rceil,\lceil k/3 \rceil)$. In fact, their method works
for counting the number of induced copies of any graph on $k$ vertices.
Cliques in sparser graphs can be found faster. Alon, the author, and Zwick \cite{AYZ-1997} showed how to detect a $K_3$  in time $O(m^{2\omega/(\omega+1)})$ in graphs with $m$ edges. This was generalized to larger cliques by Kloks, Kratsch, and M\"uller \cite{KKM-2000} and by Eisenbrand and Grandoni \cite{EG-2004}. In particular, the latter paper proves that for $k \ge 6$, a $K_k$ can be detected in $O(m^{\omega(\lfloor k/3 \rfloor,\lceil(k-1)/3\rceil,\lceil k/3 \rceil)/2})$ time. If the input graph is further assumed to be of bounded degeneracy, some small patterns can be counted in linear time, while some cannot. A full characterization of such patterns was recently obtained by Bera et al. \cite{BGLSS-2022}, extending earlier results of Chiba and Nishizeki \cite{CN-1985}, Bressan \cite{bressan-2019} and of Bera, Pashanasangi, and Seshadhri \cite{BPS-2020}.

Graphs other than cliques can, in some cases, be detected, and sometimes counted, even faster. Alon, the author, and Zwick \cite{AYZ-1995} proved that for any fixed $k \ge 3$, detecting if a graph has a cycle of length $k$ in a directed or undirected graph can be done in $\tilde{O}(n^\omega)$ time\footnote{$\tilde{O}(.)$ notation is used to suppress sub-polynomial factors.}. The author and Zwick \cite{YZ-1997} showed that even length cycles in undirected graphs can be detected in $O(n^2)$ time.
Alon, the author, and Zwick \cite{AYZ-1997} showed that the number of copies of a given cycle of length at most $7$ can be computed in $\tilde{O}(n^\omega)$ time and Vassilevska Williams et al. \cite{VWWY-2014} showed that at the same time one can detect any induced undirected pattern on four vertices other than $K_4$ and its 
complement.
Plehn and Voigt \cite{PV-1990} proved that if the given $k$-vertex pattern has treewidth $t$, then it can be detected
in $O(n^{t+1})$ time.
Vassilevska Williams and Williams \cite{VW-2013} showed that the number of
copies of a given $k$-vertex graph with an independent set of size $s$ can be computed in
$O(n^{k-s+3})$ without relying on fast matrix multiplication; note that this is faster than exhaustive search.
Using the framework of graph motif parameters, Curticapean, Dell, and Marx
\cite{CDM-2017} showed how to count an undirected pattern on $r$ edges in $n^{0.174r+o(r)}$ time.
Kowaluk, Lingas, and Lundell \cite{KLL-2013} described a general method for counting and detecting small graphs based on linear equations. In particular, their method shows that the detection problem of a given $k$-vertex graph with an independent set of size $s$ can be solved in $O(n^{\omega(\lceil (k-s)/2 \rceil,1,\lfloor (k-s)/2 \rfloor)})$. A counting version with the same runtime was obtained by Floderus et al. \cite{FKLL-2015}. Their method generalizes ideas appearing in the aforementioned papers \cite{AYZ-1997,KKM-2000} for using linear equations to compute subgraph counts. Our result showing that $c(4) \le \omega$ as well as a {\em part} of the result proving $c(5) \le \omega+1$ are also based on the linear equations method.

Finally, we mention a recent result obtained by Dalirrooyfard, Vuong, and Vassilevska Williams \cite{DVW-2019} concerning several {\em lower-bounds} for pattern detection. In particular, they prove that
if the fixed pattern contains $K_k$ as a subgraph, then the time required to detect it is at least the time required to detect a $K_k$. Likewise, if the chromatic number of the pattern is $k$, then (assuming Hadwiger's conjecture), the time required to detect it is at least the time required to detect a $K_k$.

\section{Reductions}\label{sec:reduction}

In this section we prove Theorem \ref{t:1}. The first part of the theorem is proved by the following lemma.

\begin{lemma}\label{l:1}
	For a tournament $T$ on $k$ vertices it holds that $d(T) \le d^*(k)$.
\end{lemma}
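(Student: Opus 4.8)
The plan is to give a gadget reduction from $T$-DETECT on an $n$-vertex tournament $G$ to $K_k$-DETECT on an auxiliary undirected graph $H$ with $O(n)$ vertices. The first observation I would record is that, because $T$ is itself a tournament (an orientation of $K_k$), a copy of $T$ as a subgraph of $G$ is automatically an \emph{induced} copy: any injection $\phi\colon V(T)\to V(G)$ preserving all $\binom{k}{2}$ arcs of $T$ must land on $k$ vertices whose induced subtournament has exactly $\binom{k}{2}$ arcs, all agreeing with $T$, and hence is isomorphic to $T$. Thus $T$-DETECT asks whether some $k$ vertices of $G$ induce a subtournament isomorphic to $T$, and it suffices to detect a single \emph{labelled} occurrence, i.e.\ an arc-preserving injection from a fixed labelling $1,\dots,k$ of $V(T)$ into $V(G)$.

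Next I would build the auxiliary graph. Fix the labelling of $V(T)$ and let $H$ have vertex set $\{1,\dots,k\}\times V(G)$, so $H$ is $k$-partite with parts $P_i=\{i\}\times V(G)$ and $|V(H)|=kn=O(n)$. For $i\neq j$ I place an edge between $(i,u)$ and $(j,v)$ precisely when $u\neq v$ and the orientation of the arc between $u$ and $v$ in $G$ agrees with that between $i$ and $j$ in $T$ (if $i\to j$ in $T$ then require $u\to v$ in $G$, and symmetrically); no edges are placed inside a part. Since $T$ and $G$ are tournaments, for every pair $i\neq j$ and every pair $u\neq v$ exactly one orientation holds, so the adjacency is well defined, and $H$ can be assembled in $O(n^2)$ time.

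I would then verify the correspondence. A $K_k$ in $H$ cannot use two vertices of the same part (each $P_i$ is independent), so it consists of one vertex $(i,\phi(i))$ per part; the edge condition between two chosen vertices forces the values $\phi(i)$ to be pairwise distinct and forces $\phi$ to respect every arc of $T$. Hence $k$-cliques of $H$ correspond exactly to labelled copies of $T$ in $G$, and by the first paragraph $H$ contains a $K_k$ if and only if $G$ contains a copy of $T$. Running $K_k$-DETECT on $H$ therefore solves $T$-DETECT: for any real $t>d^*(k)$ this detection runs in $O(|V(H)|^{t})=O(n^{t})$ time, which dominates the $O(n^2)$ construction cost because clique detection cannot be performed in subquadratic time and so $d^*(k)\ge 2$. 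Taking the infimum over such $t$ yields $d(T)\le d^*(k)$.

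As for the main obstacle, the construction itself is routine, so the only genuine points needing care are the two ``automatic'' facts isolated above — that subgraph copies of a tournament are forced to be induced, and that the partite structure of $H$ simultaneously enforces injectivity of $\phi$ and colourfulness of the clique — together with checking that the $O(n^2)$ preprocessing is absorbed by $d^*(k)\ge 2$, so that no extra term appears in the exponent.
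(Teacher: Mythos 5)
Your proof is correct, but it takes a genuinely different route from the paper. The paper uses color coding: it randomly partitions $V(G)$ into $k$ parts, builds an auxiliary undirected graph $G^*$ on the same $n$ vertices whose edges record agreement with the arcs of $T$ across parts, argues that a planted copy of $T$ survives with probability at least $1/k^k$, and then derandomizes via a family of $O(\log n)$ partitions obtained from almost $(k\log k)$-wise independent sample spaces, paying an $n^{o(1)}$ factor. You instead take the full deterministic ``categorical product'': $H$ on $[k]\times V(G)$ with $kn$ vertices, in which transversal $k$-cliques correspond \emph{exactly} (and parsimoniously) to labelled arc-preserving injections of $T$ into $G$, so a single call to $K_k$-DETECT suffices with no randomness and no derandomization machinery. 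Both arguments hinge on the same key fact, which you correctly isolate: since $T$ is a tournament, every pair of pattern positions carries an orientation constraint, so every pair of parts is constrained and a clique in the auxiliary graph forces a full copy of $T$; the paper's $G^*$ is in effect a random section of your product graph. What the paper's choice buys is mainly uniformity of presentation, since the same color-coding-plus-derandomization template is reused in Lemma \ref{l:reduction} for the reverse reduction, where a product construction is unavailable because the host there must remain a tournament; your construction buys simplicity, determinism, and an exact (counting-preserving) correspondence. Two small points: your claim that subgraph copies of a tournament are automatically induced is right and worth stating, and your appeal to $d^*(k)\ge 2$ to absorb the $O(n^2)$ construction cost is legitimate (it follows from the fact that one must read essentially the whole adjacency matrix, e.g.\ by the query lower bound for nontrivial monotone graph properties); the paper's proof has the same implicit dependence, since constructing each $G^*$ also costs $O(n^2)$.
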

\begin{proof}
	We first present a randomized algorithm for $T$-DETECT and then show how to derandomize it.
	Suppose $G$ is an $n$-vertex tournament. Consider a random partition of $V(G)$ into $k$ parts
	$V_1,\ldots,V_k$ (each vertex chooses its part uniformly and independent of other vertices).
	Assume that $V(T)=[k]$ and construct an undirected graph $G^*$ on vertex set $V(G)$ where for each edge $(u,v) \in E(G)$ such that 
	$u \in V_i$, $v \in V_j$ and $(i,j) \in E(T)$, we have that $uv \in E(G^*)$.
	Observe that each $V_i$ is an independent set in $G^*$ and that $G^*$ contains a $K_k$ only if
	$G$ contains $T$. On the other hand, if $G$ contains $T$ as a copy on vertices $v_1,\ldots,v_k$
	where $(v_i,v_j) \in E(G)$ if and only if $(i,j) \in E(T)$, then with probability at least $1/k^k$,
	it holds that $v_i \in V_i$ for $i=1,\ldots,k$. Hence, with probability at least
	$1/k^k$, $G^*$ contains a copy of $K_k$.
	By running a detection algorithm for $K_k$ on $G^*$ we thereby obtain a randomized algorithm for
	$T$-DETECT having the same complexity as $K_k$-DETECT. The algorithm never falsely detects a $T$ if there isn't one, and detects $T$ with probability at least $1/k^k$ if one exists.
	
	To derandomize the algorithm, we need to exhibit a set of partitions of $V(G)$ such that for any ordered set of $k$ distinct vertices $v_1,\ldots,v_k$, at least one of these partitions has $v_i$ in it's $i$'th part for $i=1,\ldots,k$. Such a set of partitions, of size $O(\log n)$, can easily be deterministically constructed in $O(n \log n)$ time, using a sequence of $(k\log k)n$ random variables that are almost
	$(k\log k)$-wise independent \cite{AGHP-1992,NN-1993}; see also \cite{AYZ-1995} for an identical derandomization scenario as the one in this lemma.
	We therefore obtain a deterministic algorithm for $T$-DETECT running in $n^{d^*(k)+o_n(1)}$ time.
	In particular, $d(T) \le d^*(k)$.
\end{proof}

The second part of Theorem \ref{t:1} requires several lemmas and the following definition.
\begin{definition}[signature]
	We say that a subset of vertices $R$ of a tournament $T$ is a {\em signature} if the following holds:
	Suppose $T^*$ is a tournament on the same vertex set as $T$ that is obtained from $T$ by changing
	the orientation of at least one edge with both endpoints not in $R$ but not changing the orientation of any edge with an endpoint in $R$, then $T^*$ is not isomorphic to $T$.
	Let ${\rm sig}(T)$ be the smallest size of a signature of $T$ (vacuously, the entire vertex set of $T$ is a signature). 
\end{definition}

Notice that both ${\rm sig}(T_3)={\rm sig}(C_3)=1$.
In the case of $T_3$, take as signature the vertex that is neither a source nor a sink, and in the case of $C_3$, any vertex can be taken as signature.
It is easy to check that ${\rm sig}(T) > 1$ for any tournament with at least four vertices.
It is also not difficult to check that ${\rm sig}(T_k) \le \lfloor k/2 \rfloor$ ---
indeed consider the vertex labeling of $T_k$ with the labels $[k]$ where
$(i,j)\in E(T_k)$ whenever $i < j$ and take as signature the even labeled vertices.
However, more can be said in general. Let ${\mathcal G}_k$ denote the symmetric probability space of $k$-vertex
tournaments on vertex set $[k]$; i.e. the orientation of each possible edge is determined by a fair coin flip, and all
$\binom{k}{2}$ choices are independent.
\begin{lemma}\label{l:prob}
	Let $T \sim {\mathcal G}_k$. Then $\Pr[{\rm sig}(T) = O(\log k)] = 1-o_k(1)$.
\end{lemma}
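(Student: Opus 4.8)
The plan is to first convert the combinatorial condition ``$R$ is a signature'' into a statement about permutations preserving part of the orientation of $T$, and then to show that for a \emph{fixed} set $R$ of size $s=C\log k$ a direct probabilistic argument makes $R$ a signature with probability $1-o_k(1)$.

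First I would reformulate the definition. Call a permutation $\phi$ of $V(T)=[k]$ \emph{$R$-admissible} if for every pair $\{i,j\}$ with $\{i,j\}\cap R\neq\emptyset$ one has that $T$ orients $i\to j$ exactly when it orients $\phi(i)\to\phi(j)$; write $\Phi_R$ for the set of such $\phi$. Given any $T^*$ that differs from $T$ only on edges inside $V\setminus R$ and is isomorphic to $T$ via some $\phi$, the set of pairs on which $T$ and $T^*$ disagree is exactly the set on which $T$ disagrees with its pullback under $\phi$; this set is contained in $\binom{V\setminus R}{2}$ precisely when $\phi\in\Phi_R$, and it is nonempty precisely when $\phi\notin\mathrm{Aut}(T)$. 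Hence $R$ is a signature if and only if every $R$-admissible permutation is an automorphism of $T$. In particular it \emph{suffices} to prove that $\Phi_R=\{\mathrm{id}\}$, since the identity is always both admissible and an automorphism; this single condition also absorbs the (separately true, but unneeded) rigidity of a random tournament.

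Next I would fix $R=\{1,\dots,s\}$ with $s=C\log k$ --- legitimate because $\mathcal G_k$ is exchangeable, so the probability that $R$ is a signature does not depend on which $s$-subset is chosen --- and record the orientation of each edge as an independent uniform bit $y_e\in\{0,1\}$. For a fixed $\phi$, membership $\phi\in\Phi_R$ is the conjunction, over all $R$-incident pairs $e$, of an $\mathbb F_2$-linear equation $y_e+y_{\bar\phi(e)}=b_e$ (here $\bar\phi$ is the action of $\phi$ on pairs and $b_e$ records whether $\phi$ reverses the chosen reference orientation of $e$). Thus $\Pr[\phi\in\Phi_R]=2^{-\mathrm{rank}(\phi)}$ when the system is consistent and $0$ otherwise, so in all cases $\Pr[\phi\in\Phi_R]\le 2^{-\mathrm{rank}(\phi)}$. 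Already this observation disposes of the most dangerous near-identity maps: if $\phi$ transposes two vertices at least one of which lies in $R$, the equation coming from that pair reads $y_e=1-y_e$ and is inconsistent, so $\Pr=0$; and a transposition of two vertices outside $R$ forces them to have identical profiles on $R$, an event of probability $2^{-s}$.

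Finally I would close the argument by a union bound over all $\phi\neq\mathrm{id}$, organized by the number $m\ge2$ of points that $\phi$ moves: there are at most $k^m$ such permutations, so it is enough to show that $\mathrm{rank}(\phi)$ grows fast enough in $m$ (comfortably more than $m\log_2 k$) that $\sum_{m\ge2}k^m 2^{-\mathrm{rank}(\phi)}=o_k(1)$ for a suitable constant $C$. \textbf{The hard part will be} this uniform lower bound on $\mathrm{rank}(\phi)$ over \emph{all} permutation types, because the number of available constraints is smallest exactly where we have least room to spare. I expect to split into two regimes. When $\phi$ fixes $R$ setwise, the bipartite $R$--$(V\setminus R)$ equations say that $\phi$ preserves every vertex's profile on $R$, so nontriviality forces either a profile collision or a nontrivial orientation-preserving permutation within $R$, both controllable by the distinct-profiles estimate above together with rigidity of the random sub-tournament on $R$. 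When $\phi$ carries part of $R$ outside $R$, many $R$-incident edges are mapped to edges lying entirely in $V\setminus R$, each pulling a fresh free variable into the linear system; the task is to extract $\Omega(ms)$ genuinely independent equations while discounting the cycles of $\bar\phi$ and the edges shared inside $R$. Assembling these bounds and summing over $m$ yields $\Pr[R\text{ is a signature}]=1-o_k(1)$, and hence $\Pr[\mathrm{sig}(T)\le s]=1-o_k(1)$ with $s=O(\log k)$.
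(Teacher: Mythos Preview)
Your reformulation --- $R$ is a signature precisely when $\Phi_R\subseteq\mathrm{Aut}(T)$, so it suffices to show $\Phi_R=\{\mathrm{id}\}$ --- is correct, and the $2^{-\mathrm{rank}}$ estimate together with a union bound is the right framework. Your Case~1 ($\phi$ fixes $R$ setwise) is also correctly handled: rigidity of $T|_R$ forces $\phi|_R=\mathrm{id}$, and then profile-distinctness forces $\phi=\mathrm{id}$, both events holding with probability $1-o_k(1)$ once $|R|\ge C\log k$.

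The gap is Case~2, which you flag as ``the hard part'' and do not resolve. The target of $\Omega(ms)$ independent equations is the right order, but you give no construction, and organising the union bound by $m$ (the support size of $\phi$ in all of $[k]$) is awkward: there are up to $k^m$ permutations with that support, yet the exploitable $R$-incident constraints are governed by $\phi|_R$, not by $m$. The paper closes exactly this gap with a cleaner decomposition. Since any $\phi\in\Phi_R$ restricted to $R$ is an embedding of $H:=T|_R$ into $T$, it suffices to establish two properties of $T$: (P1) the only embedding $f:R\hookrightarrow[k]$ of $H$ into $T$ is the identity; (P2) the vertices of $[k]\setminus R$ have pairwise distinct profiles on $R$. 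Your Case~1 is (P2) together with the $f(R)=R$ instance of (P1); your unfinished Case~2 is the remainder of (P1). The paper proves (P1) by a union bound over the fewer than $k^{|R|}$ injections $f$, split by the number $p\le|R|$ of non-fixed points of $f$, and for each $f$ exhibits an explicit family of pairwise-disjoint forced edge-pairs (one moved vertex $v$ against $R\setminus\{v,f(v)\}$ when $p\le 11$; a set of $\lfloor p/4\rfloor$ moved vertices against a residual subset of $R$ when $p\ge 12$), yielding probabilities at most $4\cdot 2^{-|R|}$ and $k^{-6\lfloor p/4\rfloor}$ respectively. With $|R|=\lceil 12\log_2 k\rceil$ both regimes close. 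The key simplification over your sketch is replacing the union over permutations of $[k]$, indexed by $m\le k$, with a union over embeddings of $R$, indexed by $p\le|R|=O(\log k)$; this is what makes the explicit edge-counting tractable.
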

\begin{proof}
	Let $r= \lceil 12\log_2 k \rceil$. We prove that with high probability $[r]$ is a signature of $T$.
	To this end, we need to establish some properties that hold with high probability.
	
	Let $H$ be the $r$-vertex sub-tournament of $T$ induced by $[r]$. Let $f$ denote a bijection between $[r]$ and a subset of $r$ vertices
	of $T$ and notice that there are fewer than $k^r$ possible choices for $f$. We say that $f$ is an isomorphism of $H$ whenever it holds that $(i,j) \in E(H)$ if an only if $(f(i),f(j)) \in E(T)$. Let (P1) denote the event that
	no $f$ other than the identity is an isomorphism of $H$.
	
	For a vertex $v \in [k] \setminus [r]$, let $\vec{v}$ be the vector in $\{0,1\}^r$ where $\vec{v}_i = 1$ if and only if
	$(i,v) \in E(T)$ (equivalently, $\vec{ v}_i=0$ if and only if $(v,i) \in E(T)$). Let (P2) be the event that
	all $k-r$ vectors (one for each $v \in [k] \setminus [r]$) are distinct.
	
	We will prove that (P1) and (P2) each hold with high probability. Once we establish that, the lemma follows since if (P1) and (P2) both hold, then $[r]$ is a signature. Indeed,
	suppose that we have changed the orientation of some edges with both endpoints in $[k] \setminus [r]$ to obtain some tournament $T^*$ that is isomorphic to $T$. In particular, suppose that the edge $(v,v')$ with both $v,v' \in [k] \setminus [r]$ changed its orientation.
	Let $g$ be an isomorphism from $T$ to $T^*$. Since $g$ restricted to $[r]$ is an isomorphism of $H$, we have by (P1) that we must have $g(i)=i$ for all $i \in [r]$. But given this, we now have by (P2) that we must have
	$g(v)=v$ for all $v \in [k]\setminus [r]$, so $g$ is the identity. As we assume that $g$ is an isomorphism, we
	have that $(g(v),g(v'))=(v,v')$ has not changed its orientation, a contradiction. Thus, $[r]$ is a signature.
	
	It remains to prove that (P1) and (P2) each hold with high probability. Consider first (P2).
	For distinct $v,v' \in [k] \setminus [r]$, the probability that $\vec{v}=\vec{v'}$ is precisely
	$2^{-r} \le 1/k^{12}$. As there are less than $k^2$ such pairs to consider, it follows that
	(P2) holds with probability at least $1-1/k^{10}$.
	
	To prove (P1), we partition the possible $f$ according to their number of stationary points, where
	notice that if $f$ is not the identity, it has at most $r-1$ stationary points. Fix $1 \le p \le r$ and 
	observe that there are fewer than than $\binom{r}{p}k^{p} < r^pk^p$ possible $f$ with precisely $p$ non-stationary points. Considering such an $f$, let $P \subseteq [r]$ be its set of non-stationary points, where
	$|P|=p$. Consider first the case $1 \le p \le 11$.
	Fix some $v \in P$ and let $R^*=[r] \setminus \{v, f(v)\}$. For $f$ to be an isomorphism of $H$ we need that
	that for each $u \in R^*$, it holds that $(v,u) \in E(H)$ if and only if $(f(v),f(u)) \in E(T)$.
	As these involve $|R^*|$ pairs of {\em distinct} edges (i.e. all $2|R^*|$ involved edges are distinct),
	the probability that $f$ is an isomorphism is at most $1/2^{|R^*|} \le 4/2^r \le 4/k^{12}$.
	Using the union bound for all possible such $f$ (with $1 \le p \le 11$), we obtain that with probability at least
	$1- 11r^{11}k^{11} \cdot 4/k^{12} = 1-o_k(1)$, all such $f$ are not an isomorphism.
	Consider next the case of some fixed $p$ with $12 \le p \le r$. Fix some $q=\lfloor p/4 \rfloor$ vertices of $P$,
	say $v_1,\ldots,v_q$, such that $\{v_1,\ldots,v_q\} \cap \{f(v_1,\ldots,f(v_q)\} = \emptyset$.
	Let $R^*=[r] \setminus \{v_1,\ldots,v_q,f(v_1),\ldots,f(v_q)\}$ and observe that $|R^*| \ge r-2q$.
	For $f$ to be an isomorphism of $H$ we need that
	that for each $1 \le i \le q$ and for each $u \in R^*$, it holds that $(v_i,u) \in E(H)$ if and only if $(f(v_i),f(u)) \in E(T)$.
	As these involve $q|R^*|$ pairs of {\em distinct} edges (i.e. all $2q|R^*|$ involved edges are distinct),
	the probability that $f$ is an isomorphism is at most
	$$
	\frac{1}{2^{q|R^*|}} \le \frac{1}{2^{q(r-2q)}} \le \frac{1}{2^{q(r-p/2)}} \le \frac{1}{2^{qr/2}} \le \frac{1}{k^{6q}}=\frac{1}{k^{6\lfloor p/4 \rfloor}}\;.
	$$
	Using the union bound for all possible such $f$ (with precisely $p$ non-stationary points), we obtain that with probability at least
	$$
	1- r^pk^p \cdot \frac{1}{k^{6\lfloor p/4 \rfloor}} = 1-o(k^{-1})
	$$
	all such $f$ are not an isomorphism. As there are less than $r=o(k)$ cases of $p$ to consider,
	we obtain from the union bound that with probability $1-o_k(1)$, no $f$ other than the identity is an isomorphism of $H$.
\end{proof}

\begin{lemma}\label{l:partition}
	Let $T$ be a fixed tournament on $k$ vertices. Let $G$ be a tournament on $n$ vertices given as input together with a partition of its vertex set into $k$ parts. Then, the number of copies of $T$ in $G$ having precisely
	one vertex in each part can be computed in the same time (up to a constant factor) as that of $T$-COUNT in $G$.
\end{lemma}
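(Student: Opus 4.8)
The plan is to compute the desired ``colorful'' count---copies of $T$ meeting every part exactly once---by an inclusion--exclusion over the parts, where each individual term is an ordinary (unpartitioned) $T$-COUNT on an induced subtournament. Write $V_1,\ldots,V_k$ for the given parts. For a subset $S \subseteq [k]$, let $G_S$ denote the subtournament of $G$ induced by $\bigcup_{i \in S} V_i$, and let $f(S)$ be the number of copies of $T$ in $G_S$. Since $G_S$ is itself a tournament on at most $n$ vertices, the value $f(S)$ can be obtained by a single call to a $T$-COUNT algorithm on $G_S$.

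The key observation is that a copy of $T$ has exactly $k$ vertices while there are exactly $k$ parts, so a copy meets every part exactly once if and only if it meets every part at least once, i.e.\ if and only if it is not contained in any union of fewer than $k$ parts. First I would make this precise via inclusion--exclusion. For $i \in [k]$ let $A_i$ be the set of copies of $T$ using no vertex of $V_i$; then $\bigcap_{i \in J} A_i$ is exactly the set of copies contained in $G_{[k]\setminus J}$, so its cardinality equals $f([k]\setminus J)$. Applying the standard inclusion--exclusion to count the copies belonging to none of the $A_i$, and reindexing by $S = [k]\setminus J$ so that $|J| = k-|S|$, yields
$$
\#\{\text{copies of } T \text{ meeting every part}\} \;=\; \sum_{S \subseteq [k]} (-1)^{\,k-|S|}\, f(S).
$$
This identity holds verbatim even when some part is empty, in which case both sides correctly evaluate to zero (an empty part adds no vertex to any $G_S$, and no copy can meet it).

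It remains to bound the running time. There are $2^k$ subsets $S \subseteq [k]$, and since $k$ is a fixed constant this is $O(1)$ terms. Each term $f(S)$ costs a single $T$-COUNT invocation on $G_S$, a tournament on at most $n$ vertices; as a $T$-COUNT algorithm running in $O(m^t)$ time on $m$-vertex inputs costs at most $O(n^t)$ on any induced subtournament of $G$, each such invocation costs at most as much as $T$-COUNT on all of $G$, up to a constant factor. Summing the $O(1)$ terms (together with the $O(1)$ signed additions) gives the claimed bound. I do not expect a genuine obstacle here; the only points requiring care are the first equivalence---that ``one vertex per part'' coincides with ``touching every part,'' which relies crucially on $|V(T)| = k$ matching the number of parts---and the routine remark that each $G_S$ is a legitimate tournament on which the counting algorithm may be invoked unchanged.
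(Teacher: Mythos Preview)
Your proof is correct and follows essentially the same approach as the paper: both use inclusion--exclusion over subsets $S\subseteq[k]$, counting copies of $T$ in the induced subtournament $G_S$ via a single $T$-COUNT call, and arrive at the identical formula $M=\sum_{S\subseteq[k]}(-1)^{k-|S|}f(S)$ with $2^k=O(1)$ terms. Your write-up is in fact slightly more detailed than the paper's, making explicit the equivalence between ``one vertex per part'' and ``meets every part'' (which hinges on $|V(T)|=k$) and the derivation via the avoidance sets $A_i$.
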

\begin{proof}
	Our proof uses inclusion-exclusion and is very similar to the proofs appearing in \cite{CM-2014,GLSY-2022}.
	Suppose the given partition of $V(G)$ is $V_1,\ldots,V_k$. For a non-empty subset $S \subseteq [k]$, let $G_S$ denote the sub-tournament of $G$ on vertex set $\cup_{i \in S}{V_i}$.
	Let $f(G_S)$ denote the number of copies of $T$ in $G_S$ and let $M$ denote the
	number of of copies of $T$ in $G$ with precisely one vertex in each part.
	By the inclusion-exclusion principle we have
	$$
	M = \sum_{S \subseteq [k]} (-1)^{k-|S|}f(G_S)\;.
	$$
	As each $f(G_S)$ is computed by solving $T$-COUNT in a graph with at most $n$ vertices
	and as the last equation involves only a constant number of terms, the lemma follows.
\end{proof}

\begin{lemma}\label{l:reduction}
	Let $T$ be a tournament with $k$ vertices and with ${\rm sig}(T)=r$.
	Then, given an undirected $n$-vertex graph $G$, we can decide if $G$ has a copy of $K_{k-r}$ in time $\tilde{O}(n^{c(T)})$.
\end{lemma}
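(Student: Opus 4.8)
The plan is to reduce $K_{k-r}$-detection in undirected graphs to \emph{counting copies of $T$ with one vertex in each part} of a carefully built auxiliary tournament, which by Lemma~\ref{l:partition} costs no more than $T$-COUNT. Fix a signature $R$ of $T$ with $|R|=r$, and write $R=\{s_1,\dots,s_r\}$ and $W=V(T)\setminus R=\{w_1,\dots,w_{k-r}\}$. Given the undirected input $G$ on $n$ vertices, I build a tournament $G^{\dagger}$ on $r+(k-r)n=O(n)$ vertices: $r$ \emph{gadget} vertices $a_1,\dots,a_r$ together with $k-r$ parts $B_1,\dots,B_{k-r}$, each a private copy of $V(G)$ (write $v^{(j)}\in B_j$ for the copy of $v\in V(G)$). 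The orientations mirror $T$: the $a_i$ carry the orientations of $T[R]$ with $a_i$ playing the role of $s_i$; each $a_i$ is joined to every vertex of $B_j$ exactly as $s_i$ is joined to $w_j$ in $T$; and for $j\ne j'$ the edge between $u^{(j)}$ and $v^{(j')}$ is oriented as $T$ orients $w_jw_{j'}$ when $u,v$ are distinct and adjacent in $G$, and oriented the opposite way otherwise. Edges inside each $B_j$ are oriented arbitrarily, as they never matter for a copy meeting each part once. The partition into the $k$ parts $\{a_1\},\dots,\{a_r\},B_1,\dots,B_{k-r}$ is what I feed to the algorithm of Lemma~\ref{l:partition}.

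I then claim that $G^{\dagger}$ contains a copy of $T$ with exactly one vertex in each part if and only if $G$ contains $K_{k-r}$. The easy direction is immediate: a clique $u_1,\dots,u_{k-r}$ yields the map $s_i\mapsto a_i$, $w_j\mapsto u_j^{(j)}$, which is orientation-preserving by construction and meets each part once.

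The converse is the heart of the argument, and the main obstacle is that an arbitrary one-per-part copy $\phi$ need \emph{not} respect the intended roles: it may send non-signature vertices into the gadget, send the $W$-vertices into the parts in a scrambled order, and a priori a ``perturbed'' $W$-structure arising from non-adjacent pairs of $G$ might still realize $T$. This is exactly what the signature hypothesis forbids, and the key step is to reformulate the definition as a \emph{unique-completion} statement: the only tournament on $V(T)$ that is isomorphic to $T$ and agrees with $T$ on every edge incident to $R$ is $T$ itself (indeed, agreeing on all $R$-incident edges while differing on some $W$-internal edge would, by definition, destroy the isomorphism type). Given any one-per-part copy $\phi$, set $x_i=\phi^{-1}(a_i)$ and let $y_j$ be the unique vertex sent into $B_j$, so $\phi(y_j)=u_j^{(j)}$ for some $u_j\in V(G)$; define the bijection $\psi$ on $V(T)$ by $\psi(x_i)=s_i$ and $\psi(y_j)=w_j$. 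Since $\phi$ is orientation-preserving, reading off the gadget edges shows $\psi$ preserves every $R$-internal edge, and reading off the $a_i$-to-$B_j$ edges shows $\psi$ preserves every $R$-$W$ edge; hence the relabelled tournament $T':=\psi(T)$ is isomorphic to $T$ and agrees with $T$ on every edge incident to $R$. The unique-completion statement then forces $T'=T$, so $\psi$ preserves the $W$-$W$ edges as well. Tracing those edges back through the construction forces, for each pair $j\ne j'$, the vertices $u_j,u_{j'}$ to be distinct and adjacent in $G$; thus $\{u_1,\dots,u_{k-r}\}$ is a $K_{k-r}$.

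Finally, since $G^{\dagger}$ has $O(n)$ vertices, Lemma~\ref{l:partition} computes the number of one-per-part copies of $T$ in the running time of $T$-COUNT on an $O(n)$-vertex tournament, i.e.\ in $n^{c(T)+o(1)}=\tilde{O}(n^{c(T)})$; testing whether this count is nonzero decides $K_{k-r}$, which completes the reduction. The only genuinely delicate point is the unique-completion reformulation together with the orientation bookkeeping of the previous paragraph; everything else is routine.
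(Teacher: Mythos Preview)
Your proof is correct, and it reaches the same conclusion via Lemma~\ref{l:partition} and the signature property, but the packaging differs from the paper's in a way worth noting. The paper takes a \emph{color-coding} route: it randomly partitions $V(G)$ itself into $k-r$ parts, adjoins $r$ singleton gadget parts, and orients cross edges according to $T$ (reversing exactly the non-edges of $G$); a one-per-part copy of $T$ is then argued to force a colorful $K_{k-r}$, the success probability is $\ge (k-r)^{-(k-r)}$, and derandomization is imported from the proof of Lemma~\ref{l:1}. You instead use a \emph{blow-up}: you take $k-r$ disjoint copies of $V(G)$ as the non-signature parts, so the reduction is deterministic from the outset and the derandomization step simply disappears. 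The cost is an auxiliary tournament on $(k-r)n+r$ vertices rather than $n+r$, but since $k$ is fixed this is still $O(n)$ and the runtime bound is unaffected.

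Your handling of the hard direction is also more explicit than the paper's. The paper asserts in one line that ``by construction of $G^*$ and by the definition of signature, vertex $i$ of $T$ in such a copy must belong to $V_i$''; you unpack this by introducing the bijection $\psi$ and the relabelled tournament $T'=\psi(T)$, observing that $T'$ agrees with $T$ on all $R$-incident edges, and then invoking the signature hypothesis in its contrapositive (``unique-completion'') form to conclude $T'=T$. This is exactly the content behind the paper's sentence, and your version makes transparent why automorphisms of $T$ cause no trouble: even if the embedding does not send $s_i$ to $a_i$ literally, the pullback $T'$ is still forced to equal $T$, which is all that is needed to read off the clique.
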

\begin{proof}
	    Assume that $V(T)=[k]$ where the first $r$ vertices are a signature.
		We present a randomized algorithm for $K_{k-r}$-DETECT. Derandomization is achieved in
		the same way as explained in the proof of Lemma \ref{l:1}, incurring only a logarithmic factor cost.
		Consider a random partition of $V(G)$ into $k-r$ parts
		$V_{r+1},\ldots,V_k$ (each vertex chooses its part uniformly and independent of other vertices).
		Also define new $r$ additional singleton sets $V_1,\ldots,V_r$.
		Construct a {\em tournament} $G^*$ on $n+r=O(n)$ vertices as follows. Its vertex set is
		$\cup_{i=1}^k V_i$. Its edge set is defined as follows. An edge between two vertices in the same part is oriented arbitrarily. Suppose now that $u \in V_i$ and $w \in V_j$ where $i \neq j$ and that
		$(i,j) \in E(T)$. If $i \le r$ or $j \le r$ then orient the edge as $(u,w)$. Otherwise (i.e. if $u > r$ and $v > r$) then: if $uw$ is an edge of $G$ then orient the edge as $(u,w)$ else orient the edge as
		$(w,u)$. This completely defines the tournament $G^*$.
		
		Suppose now that $G^*$ has a copy of $T$ where that copy has precisely one vertex in each of
		$V_1,\ldots,V_k$. By construction of $G^*$ and by the definition of signature, vertex $i$ of $T$ in such a copy must belong to $V_i$ and therefore $G$ has a copy of $K_{k-r}$ with each vertex in precisely one of the sets
		$V_{r+1},\ldots,V_k$. Similarly, if $G$ has no copy of $K_{k-r}$, then it also has no such copy with
		precisely one vertex in each of $V_{r+1},\ldots,V_k$ and so $G^*$ has no copy of $T$ with one vertex in
		each of $V_1,\ldots,V_k$.  By Lemma \ref{l:partition}, we can determine if $G^*$ has a copy of $T$
		with one vertex in each part in time $\tilde{O}(n^{c(T)})$. 
		As in the proof of Lemma \ref{l:1}, observe that if $G$ has a $K_{k-r}$ then with probability at
		least $1/(k-r)^{k-r}$ it holds that each vertex of some such copy belongs to each one of
		$V_{r+1},\ldots,V_k$. We therefore obtain a randomized algorithm for  $K_{k-r}$-DETECT
		running in time  $\tilde{O}(n^{c(T)})$.
\end{proof}

\begin{proof}[Proof of Theorem \ref{t:1}]
	By Lemma \ref{l:1}, $d(T) \le d^*(k)$ for every tournament $T$ on $k$ vertices. Thus,
	$d(k) \le d^*(k)$ and part (1) of the theorem follows.
	By Lemma \ref{l:prob}, almost all tournaments $T$ on $k$ vertices have $sig(T)=O(\log k)$.
	Now, if $T$ is a tournament with $sig(T)=r=O(\log k)$, then by Lemma \ref{l:reduction} it follows that
	$c(T) \ge d^*(k-r)=d^*(k-O(\log k))$. Thus, part (2) of the theorem follows.
\end{proof}

For completion, we end this section by showing that the method of
Ne\v{s}et\v{r}il and Poljak \cite{NP-1985} and Eisenbrand and Grandoni \cite{EG-2004} for counting induced $k$-vertex graphs in undirected graphs
can be used for counting tournaments in a similar way.
\begin{proposition}
	$c(k) \le \omega(\lfloor k/3 \rfloor,\lceil(k-1)/3\rceil,\lceil k/3 \rceil)$.
\end{proposition}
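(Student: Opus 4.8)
The plan is to adapt the Neše\v{t}řil--Poljak--Eisenbrand--Grandoni technique, which counts $k$-vertex patterns by splitting the pattern vertices into three groups and reducing the count to a rectangular matrix product over a suitable auxiliary tripartite structure. First I would fix the target tournament $T$ on $k$ vertices and partition $V(T)$ into three blocks $A, B, C$ of sizes $\lfloor k/3 \rfloor$, $\lceil(k-1)/3\rceil$, $\lceil k/3 \rceil$ respectively (these sizes sum to $k$). The key idea is to enumerate all ``partial copies'' of $T$ restricted to each pair of blocks: I would build one index set for each of the three blocks, where an index records an ordered tuple of vertices of $G$ together with the isomorphism-type information needed to later glue tuples consistently. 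Concretely, the $A$-indices range over all ordered tuples of $|A|$ vertices of $G$ that induce a subtournament isomorphic (via the fixed labeling) to $T[A]$, and similarly for $B$ and $C$; there are $O(n^{|A|})$, $O(n^{|B|})$, $O(n^{|C|})$ such tuples.

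Next I would define three boolean/integer matrices encoding compatibility across block boundaries. The matrix $X$ indexed by ($A$-tuples) $\times$ ($B$-tuples) would have a $1$ (or appropriate count) in position $(\alpha, \beta)$ exactly when the concatenated tuple $\alpha \cup \beta$ realizes in $G$ all the prescribed $A$--$B$ edge orientations of $T$; the matrix $Y$ indexed by ($B$-tuples) $\times$ ($C$-tuples) would do the same for the $B$--$C$ orientations, and a third relation would handle $A$--$C$. The crucial point is that a full copy of $T$ with one vertex assigned to each labeled position corresponds exactly to a triple $(\alpha, \beta, \gamma)$ that is simultaneously compatible across all three pairs of blocks; the number of such triples can be extracted from a single rectangular matrix product $X \cdot Y$ followed by a filtering step against the $A$--$C$ constraints, giving running time $O(n^{\omega(|A|,|B|,|C|)})$. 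I would then invoke the symmetry of $\omega(a,b,c)$ to match the exponent $\omega(\lfloor k/3 \rfloor,\lceil(k-1)/3\rceil,\lceil k/3 \rceil)$ stated in the proposition.

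The main obstacle I anticipate is correctly handling the $A$--$C$ constraints together with the symmetric-automorphism overcounting, so that the matrix product counts each labeled copy exactly once and each unlabeled copy the right number of times. In the undirected induced-subgraph setting of \cite{EG-2004,NP-1985} this is managed by working with fixed orderings and dividing by $|\mathrm{Aut}(T)|$ at the end; for tournaments the same bookkeeping applies, since a tournament is just a complete graph with orientations and the induced-subgraph counting machinery carries over verbatim once ``edge present/absent'' is replaced by ``edge oriented one way/the other way.'' One subtlety is that the three pairwise constraints are not independent---the $A$--$C$ orientations are not captured by the product $X \cdot Y$ alone---so I would either fold the $A$--$C$ check into the definition of the index sets (restricting $\gamma$ relative to $\alpha$ before the product) or, following the standard reduction, absorb it by a constant-size inclusion--exclusion over which pairs are forced. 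Since these adjustments only affect constant factors and the counts stored in matrix entries (which are integers bounded polynomially in $n$ and hence representable in $O(\log n)$ bits, so field operations translate to runtime as noted in the excerpt), the overall exponent is unchanged and the bound $c(k) \le \omega(\lfloor k/3 \rfloor,\lceil(k-1)/3\rceil,\lceil k/3 \rceil)$ follows. I would close by remarking that, exactly as in the cited papers, counting (not merely detecting) is obtained because we sum integer entries rather than test for a nonzero product.
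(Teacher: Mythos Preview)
Your proposal is correct and follows essentially the same route as the paper's proof: split $V(T)$ into three blocks, index matrices by labeled partial copies of $T$ on each block, set an entry to $1$ when the pair of tuples induces the correct two-block sub-tournament, multiply, then post-filter by the $A$--$C$ constraint and divide by $|\mathrm{Aut}(T)|$. The ``obstacle'' you anticipate is resolved exactly by the filtering step you already described---the paper simply sums $Q_3[H,J]$ only over those pairs $(H,J)$ for which $V(H)\cup V(J)$ induces the correct $T_{\{1,3\}}$---so no inclusion--exclusion or pre-restriction is needed.
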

\begin{proof}
	For simplicity, assume that $k$ is a multiple of $3$. The cases of the other two moduli can be proved analogously. Let $T$ be a labeled tournament on vertex set $[k]$.
	Let $A_1=\{1,\ldots,k/3\}$, $A_2=\{k/3+1,\ldots,2k/3\}$ and $A_3=\{2k/3+1,\ldots,k\}$.
	For $X \subseteq \{1,2,3\}$ with $X \neq \emptyset$, let $T_X$ be the labeled sub-tournament of $T$ induced on $\cup_{i \in X}A_i$.
	Let $G$ be a tournament on $n$ vertices whose vertices are labeled with $[n]$.
	For $i=1,2,3$, let $S_i$ be the set of all labeled subgraphs of $G$ that are isomorphic to $T_{\{i\}}$.
	Construct a $0/1$ matrix $Q_1$ whose rows are indexed by $S_1$ and columns indexed by $S_2$.
	We set $Q_1[H,J]=1$ if and only if the labeled sub-tournament of $G$ on $V(H) \cup V(J)$ induces an isomorphic copy of $T_{\{1,2\}}$.
	Similarly, construct a $0/1$ matrix $Q_2$ whose rows are indexed by $S_2$ and columns indexed by $S_3$.
	We set $Q_2[H,J]=1$ if and only if
	the labeled sub-tournament of $G$ on $V(H) \cup V(J)$ induces an isomorphic copy of
	$T_{\{2,3\}}$. Observe that $Q_1$ and $Q_2$ can be constructed in $O(n^{2k/3})$ time by exhaustive search
	and that each have $O(n^{k/3})$ rows and $O(n^{k/3})$ columns.
	Now, consider the product $Q_3=Q_1Q_2$ which, by definition of the matrix multiplication exponent, can be computed in $O(n^{\omega(k/3,k/3,k/3)})$ time. Consider an entry $Q_3[H,J]$ for which $Q_3[H,J] > 0$.
	If it holds that the labeled sub-tournament of $G$ on $V(H) \cup V(J)$ 
	induces an isomorphic copy of $T_{\{1,3\}}$, then
	$Q_3[H,J]$ precisely counts the number of labeled sub-tournaments in $G$ which are isomorphic to
	$T=T_{\{1,2,3\}}$ and in which the vertices of $V(H) \cup V(J)$ correspond to a particular labeled copy of $T_{\{1,3\}}$. Furthermore, each such labeled copy is counted precisely once in this way.
	We can therefore count the number of label-isomorphic copies of $T$ in $G$, and dividing by the order of the automorphism group of $T$ we obtain the solution to $T$-COUNT in $O(n^{\omega(k/3,k/3,k/3)})$ time. As this holds for any given $T$, the proposition follows.
\end{proof}

\section{Finding and counting four-vertex tournaments}\label{sec:four}

There are four distinct tournaments on four vertices. These are $T_4$, $D$, $D^T$ and $X_4$
where $D$ is the tournament with a directed triangle and an additional source vertex,
$D^T$ is its transpose, i.e., the tournament with a directed triangle and an additional sink vertex,
and $X_4$ is the strongly connected tournament on four vertices (i.e., the tournament which has a directed four-cycle). In this section we consider $T$-DETECT and $T$-COUNT for each of them.
As $D^T$ is the transpose of $D$, it immediately follows that $D$-DETECT and $D^T$-DETECT are computationally equivalent and that $D$-COUNT and $D^T$-COUNT are computationally equivalent.
Another trivial observation, mentioned in the introduction, is that $T_k$-DETECT can be solved in constant time for any fixed $k$, in particular $d(T_4)=0$.

Starting with the detection problem, given the above, we need to establish the complexities of $X_4$-DETECT
and $D$-DETECT. The first is fairly simple.
\begin{lemma}\label{l:detect-c4}
	$d(X_4)=2$.
\end{lemma}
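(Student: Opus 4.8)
The plan is to prove $d(X_4)=2$ by establishing both bounds. The lower bound $d(X_4)\ge 2$ is immediate: any detection algorithm must read enough of the input to rule out the possibility that $G$ is transitive (a transitive tournament contains no $X_4$, since $X_4$ is strongly connected while $T_n$ has no directed cycle at all), and as noted for $C_3$ in the introduction, distinguishing a transitive tournament requires reading essentially the entire input, giving $\Omega(n^2)$. So the substance is the upper bound $d(X_4)\le 2$: I must exhibit an $O(n^2)$-time algorithm that decides whether $G$ contains a copy of $X_4$, the strongly connected four-vertex tournament (equivalently, a tournament on four vertices containing a directed $4$-cycle).

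\textbf{Main approach.} The key structural fact I would exploit is that $X_4$ is precisely the unique strongly connected tournament on four vertices, and a four-vertex tournament fails to be $X_4$ exactly when it is one of $T_4, D, D^T$ — each of which has a vertex that is a global source (out-degree $3$) or global sink (in-degree $3$) within the four vertices. Rather than searching for $X_4$ directly, I would search for its small strongly-connected ``building block.'' The cleanest route is to recall that $X_4$ contains a directed triangle $C_3$ together with a fourth vertex arranged so that the whole thing is strongly connected. So first I would locate a directed triangle. A directed triangle exists in $G$ if and only if $G$ is \emph{not} transitive, and as observed in the introduction one can test for and find a directed triangle (indeed read the whole tournament) in $O(n^2)$ time; concretely, $G$ is transitive iff ordering vertices by out-degree yields a consistent linear order, which is checkable in $O(n^2)$. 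If $G$ is transitive then it has no directed cycle and hence no $X_4$, so I answer NO. Otherwise I have a directed triangle on vertices $a,b,c$ with $(a,b),(b,c),(c,a)\in E(G)$.

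\textbf{From a triangle to $X_4$.} Given one directed triangle $a,b,c$, I would argue that $G$ contains $X_4$ if and only if it contains \emph{any} directed triangle that can be extended to a strongly connected four-set — and that checking extendability is cheap. The crucial observation is: a directed triangle $\{a,b,c\}$ extends to an $X_4$ precisely when there is a fourth vertex $w$ whose edges to $\{a,b,c\}$ are not ``all in'' or ``all out'' in the degenerate way that produces $D$ or $D^T$; a short case analysis on the $2^3=8$ possible orientation patterns of $w$ against the triangle shows that the four-set $\{a,b,c,w\}$ is strongly connected (i.e. is $X_4$) for six of the eight patterns, failing only when $w$ beats all of $a,b,c$ (giving $D$) or loses to all of $a,b,c$ (giving $D^T$). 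Thus, fixing the one triangle $\{a,b,c\}$, I scan all $n-3$ remaining vertices $w$ in $O(n)$ time and check whether any $w$ has a mixed pattern; if so I output YES. \textbf{The only subtle point}, and the step I expect to be the main obstacle, is handling the case where \emph{every} vertex $w$ outside the triangle is a uniform source or sink against $\{a,b,c\}$: I must show that in this degenerate situation either some other directed triangle (which I can find, since non-transitivity guarantees triangles through many vertices) yields an $X_4$, or else $G$ genuinely has no $X_4$. I would resolve this by observing that if no $w$ gives a mixed pattern against $\{a,b,c\}$, the vertices split into a ``source side'' $S$ (beating all of $a,b,c$) and a ``sink side'' $P$ (beaten by all of $a,b,c$); I then look for a directed triangle entirely within $S\cup\{a,b,c\}$ or within $P\cup\{a,b,c\}$ using a fresh vertex, or a directed edge from $P$ into $S$ creating a longer cycle, each of which immediately produces a strongly connected four-set. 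Since I only ever examine a constant number of candidate triangles and, for each, an $O(n)$ scan, plus the initial $O(n^2)$ transitivity/triangle-finding step, the total running time is $O(n^2)$, establishing $d(X_4)\le 2$ and hence $d(X_4)=2$.
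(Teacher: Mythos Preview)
Your proposal has genuine gaps in both directions.

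\textbf{Lower bound.} The claim that an $X_4$-detection algorithm ``must read enough of the input to rule out the possibility that $G$ is transitive'' does not yield a lower bound. An algorithm that correctly outputs NO need not certify transitivity: there are $X_4$-free tournaments that are not transitive (any tournament whose strongly connected components all have size $\le 3$). A valid adversary argument must show that with too few edges examined, the adversary can force \emph{both} outcomes. The paper does exactly this: if at least $n+1$ edges remain unexamined, the unexamined pairs (as an undirected graph) contain a path $a,b,c,d$ of length $3$; one then checks that for every orientation of the three possibly-examined pairs $ac,ad,bd$, the three free pairs $ab,bc,cd$ can be oriented so that $\{a,b,c,d\}$ induces $X_4$. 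Your write-up is missing this step.

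\textbf{Upper bound.} Your degenerate case is not handled correctly, and the claimed $O(n^2)$ running time does not follow. Suppose every $w\notin\{a,b,c\}$ is uniform against the triangle, so $V(G)\setminus\{a,b,c\}=S\cup P$ with $S\to\{a,b,c\}\to P$, and suppose further that $S\to P$ (no edge from $P$ into $S$). Your assertion that ``a directed triangle entirely within $S\cup\{a,b,c\}$ using a fresh vertex \ldots\ immediately produces a strongly connected four-set'' is false: any directed triangle touching $S$ must lie entirely in $S$ (since $S$ dominates each of $a,b,c$, no triangle can use one or two vertices of $S$ together with vertices of $\{a,b,c\}$), and a triangle $\{x,y,z\}\subseteq S$ together with $a$ gives $D^T$, not $X_4$, since all of $x,y,z$ beat $a$. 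What is actually true in this situation is that any copy of $X_4$ must lie entirely inside $S$ or entirely inside $P$, so you would need to \emph{recurse}. But this recursion can have depth $\Theta(n)$ (take $|P|\le 1$ at every level), and each level requires triangle-finding in the current set at quadratic cost, giving $\Theta(n^3)$ overall --- not the ``constant number of candidate triangles'' you claim.

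The paper sidesteps this entirely with a single global step: compute the strongly connected components of $G$ in $O(n^2)$ time. If every component has at most three vertices, any four vertices span at least two components and cannot be strongly connected, so there is no $X_4$. If some component has $t\ge 4$ vertices, then by the pancyclicity of strongly connected tournaments (Harary--Moser) it contains a directed $4$-cycle, and the four vertices of that cycle induce $X_4$.
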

\begin{proof}
	We recall an easy graph-theoretic observation: every undirected graph with at least $n+1$ edges has a path of length $3$.
	Suppose we have yet to examine at least $n+1$ edges of the input tournament $G$ and that we have still not found a copy of $X_4$. 
	Then, by the observation above, there are four vertices $a,b,c,d$ such that none of the edges on pairs $ab$, $bc$, $cd$, have been examined. No matter how the other three (possibly examined) edges on the pairs  $ac$, $ad$, $bd$ are oriented, it is immediate to check that it is possible to orient the edges $ab$, $bc$, $cd$ so that $\{a,b,c,d\}$ induce a copy of $X_4$.
	Thus, we might need to examine at least $\binom{n}{2}-n$ edges to determine whether $G$ contain a copy of $X_4$, implying that $d(X_4) \ge 2$.
	
	For the upper bound, compute the strongly connected components of the input tournament in $O(n^2)$ time. If each such component is a singleton or a directed triangle, then the tournament does not have an $X_4$. Otherwise, suppose some strongly connected component has $t \ge 4$ vertices. It is well-known that
	every strongly-connected tournament with $t \ge 3$ vertices is {\em pancyclic} \cite{HM-1966}, i.e., it has a directed cycle of every possible length from $3$ to $t$. In particular, as $t \ge 4$, it has a directed cycle of length $4$. The four vertices of such a cycle induce $X_4$.
\end{proof}

For the proof that $d(D)=2$ we will need a characterization of $D$-free tournaments.
Such a characterization was obtained by Liu in his thesis \cite{liu-2012}: a strongly regular tournament is $D$-free if and only if it is a transitive blowup of the carousel tournament (for every odd $n$, the carousel tournament is the unique regular tournament on $n$ vertices in which the out-neighborhood of each vertex induces a transitive tournament). However, we shall use a different characterization of Gishboliner \cite{gishboliner} whose proof is a bit simpler; it should be noted that both characterizations lead to algorithms implying $d(D)=2$. Suppose a tournament $G$ is not transitive and let $\{a,b,c\}$ induce a directed 
triangle in $G$ with $(a,b),(b,c),(c,a) \in E(G)$. For $S \subseteq \{a,b,c\}$ let
$N_S = \{x \in V(G) \setminus \{a,b,c\} \,:\, \{d \in \{a,b,c\}\,:\, (x,d) \in E(G)\}=S\}$.
So, for example $N_\emptyset$ are all the vertices that are dominated by each of $a,b,c$.
For two disjoint sets of vertices of $G$, we say that a triple of vertices is {\em bad}
in them if the triple induces a $T_3$, where the source and the sink of the $T_3$ are in one of the sets and
the remaining vertex of the $T_3$ is in the other set.
For two disjoint sets of vertices $X,Y$ of $G$, we use $X \rightarrow Y$ to denote that all possible edges go from $X$ to $Y$. Gishboliner's characterization, which we prove for completeness, is the following:
\begin{lemma}\label{l:D-free}
	$G$ is $D$-free if an only if all the following hold:\\
	(1) $N_{\{a,b,c\}}=\emptyset$.\\
	(2) For every $S \subseteq \{a,b,c\}$  with $S \neq \emptyset, \{a,b,c\}$, we have $N_S \rightarrow N_\emptyset$.\\
	(3) For every $S \subseteq \{a,b,c\}$ with $S \neq \{a,b,c\}$, $N_S$ induces a transitive tournament.\\
	(4) $N_{\{a\}} \rightarrow N_{\{b\}} \rightarrow N_{\{c\}} \rightarrow N_{\{a\}}$.\\
	(5) $N_{\{a,b\}} \rightarrow N_{\{b,c\}} \rightarrow N_{\{c,a\}} \rightarrow N_{\{a,b\}}$.\\
	(6) $N_{\{a\}} \rightarrow N_{\{a,b\}} \rightarrow N_{\{b\}}$,
	    $N_{\{b\}} \rightarrow N_{\{b,c\}} \rightarrow N_{\{c\}}$,
	    $N_{\{c\}} \rightarrow N_{\{c,a\}} \rightarrow N_{\{a\}}$.\\
	(7) Each of the pairs $\{N_{\{a\}},N_{\{b,c\}}\}$, $\{N_{\{b\}},N_{\{a,c\}}\}$, $\{N_{\{c\}},N_{\{a,b\}}\}$
	    have no bad triple.
\end{lemma}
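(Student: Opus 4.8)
The plan is to reduce the entire statement to one clean reformulation. Since $D$ consists of a directed triangle together with a vertex dominating all three of its vertices, a tournament contains a copy of $D$ if and only if some vertex $w$ has an out-neighborhood $N^+(w)$ that induces a directed triangle; equivalently, $G$ is $D$-free if and only if $G[N^+(w)]$ is transitive for \emph{every} vertex $w$. I will prove both directions against this reformulation, working throughout with the fixed triangle $a\to b\to c\to a$ and the classes $N_S$. The rotation $a\mapsto b\mapsto c\mapsto a$ permutes the conditions, so it will suffice to treat one representative in several places.

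For the ``only if'' direction I assume $G$ is $D$-free and verify each condition by producing a copy of $D$ whenever it fails. Conditions (1) and (3) are immediate: a vertex of $N_{\{a,b,c\}}$ dominates $\{a,b,c\}$, and a directed triangle inside some $N_S$ with $S\neq\{a,b,c\}$ is dominated by any $d\in\{a,b,c\}\setminus S$. For (2), (4), (5) and (6) I assume a single edge oriented opposite to the asserted direction, say between $u$ and $v$, and exhibit a directed triangle on $u$ and/or $v$ together with one or two vertices of $\{a,b,c\}$ that is dominated by one further vertex (either a vertex of $\{a,b,c\}$, or the remaining one of $u,v$). For instance, a violating edge $u\to v$ with $u\in N_{\{b,c\}}$ and $v\in N_{\{a,b\}}$ gives the directed triangle $b\to c\to v\to b$, which $u$ dominates. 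For (7), a bad triple with source $p$ and sink $s$ in one class and middle $q$ in the other yields a directed triangle on $q$, $s$ and a suitable vertex of $\{a,b,c\}$, again dominated by $p$. Each of these is a short definitional check.

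For the ``if'' direction I assume (1)--(7) and show $G[N^+(w)]$ is transitive for every $w$, splitting on the class of $w$: $w\in\{a,b,c\}$, $w\in N_\emptyset$, $w$ in a singleton class, and $w$ in a pair class (by rotational symmetry I take $N_{\{a\}}$ and $N_{\{a,b\}}$ as representatives). In the first two cases $N^+(w)$ is transitive outright: for $w\in N_\emptyset$ condition (2) confines $N^+(w)$ to $N_\emptyset$, which is transitive by (3); for $w\in\{a,b,c\}$ the classes meeting $N^+(w)$ are totally ordered by the relevant instances of (2), (4), (6), and no complementary pair of (7) is fully contained, so $N^+(w)$ is a union along a linear order. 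The interesting cases are the last two. There, too, the classes meeting $N^+(w)$ are linearly ordered by (2)--(6) with exactly one exception, namely the complementary singleton/pair governed by (7) (e.g.\ $N_{\{a\}}$ and $N_{\{b,c\}}$ when $w\in N_{\{a\}}$), whose mutual edges may go both ways. A directed triangle in $N^+(w)$ would therefore have to use a ``back edge'' $v\to u$ between these two classes; but then $w$, $v$, $u$ form a transitive triangle whose source $w$ and sink $u$ lie in one of the two classes and whose middle vertex $v$ lies in the other --- exactly a bad triple, contradicting (7).

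I expect the main obstacle to be precisely this last step. Unlike every other pair of classes, the complementary singleton/pair classes are not uniformly oriented, so transitivity of $N^+(w)$ is not a formal consequence of a linear order and genuinely requires (7). The decisive observation that unlocks it is that the source $w$ converts any back edge inside $N^+(w)$ into a bad triple, which is exactly what ties condition (7) to the nonexistence of $D$. A secondary, more clerical difficulty lies in the ``only if'' direction for conditions (5) and (6), where one must choose the correct witnessing triangle and dominator: the naive triangle through $a$ need not be dominated, whereas (as above) the triangle through $b,c$ is, so some care is needed to name the right copy of $D$ in each case.
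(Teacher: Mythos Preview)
Your proposal is correct and follows essentially the same route as the paper. Both directions rest on the reformulation ``$D$-free $\iff$ every out-neighbourhood is transitive''; the necessity direction is handled in both by exhibiting an explicit copy of $D$ when a condition fails, and the sufficiency direction is handled in both by a four-way case split on the class of $w$, producing a transitive ordering of $N^+(w)$. The one cosmetic difference is in the key step for $w\in N_{\{a\}}$ (and $w\in N_{\{a,b\}}$): the paper deduces directly from (7) that $Y:=N_{\{a\}}\cap N^+(w)\rightarrow Z:=N_{\{b,c\}}\cap N^+(w)$ and then writes down the full transitive order, whereas you phrase the same deduction as ``any back edge $v\to u$ from $Z$ to $Y$ would make $(w,v,u)$ a bad triple''. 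These are the same observation, and your identification of this step as the crux is exactly right.
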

\begin{proof}
	We show the necessity of each of the items.\\
	(1) If $x \in N_{\{a,b,c\}}$ then $\{x,a,b,c\}$ induce a copy of $D$.\\
	(2) Assume that there are $x \in N_S$, $y \in N_\emptyset$ and $(y,x) \in E(G)$.
	Since $S$ consists only of one or two vertices of $\{a,b,c\}$, we can take $d,e \in \{a,b,c\}$ such
	that $d \notin S$, $e \in S$ and $(d,e) \in E(G)$. Then $\{d,x,e,y\}$ is a copy of $D$.\\
	(3) Assume that $x,y,z \in N_S$ induce a directed triangle. Take $d \in \{a,b,c\} \setminus S$
	and note that $\{d,x,y,z\}$ induce a copy of $D$.\\
	(4) We prove that $N_{\{a\}} \rightarrow N_{\{b\}}$. The proof of the other cases is symmetrical.
	Assume that there are $x \in N_{\{a\}} $ and $y \in N_{\{b\}}$ such that $(y,x) \in E(G)$.
	Then $\{c,x,a,y\}$ induce a copy of $D$.\\ 
	(5) We prove that $N_{\{a,b\}} \rightarrow N_{\{b,c\}}$. The proof of the other cases is symmetrical.
	Assume that there are $x \in N_{\{a,b\}} $ and $y \in N_{\{b,c\}}$ such that $(y,x) \in E(G)$.
	Then $\{y,b,c,x\}$ induce a copy of $D$.\\ 
	(6) We prove that $N_{\{a\}} \rightarrow N_{\{a,b\}} \rightarrow N_{\{b\}}$. The proof of the other cases is symmetrical. First assume that there are
	$x \in N_{\{a\}}$  and $y \in N_{\{a,b\}}$ such that $(y,x) \in E(G)$.
	Then $\{y,a,b,x\}$ induce a copy of $D$.
	Now assume that there are $x \in N_{\{a,b\}}$  and $y \in N_{\{b\}}$ 
	such that $(y,x) \in E(G)$. Then $\{c,a,y,x\}$ induce a copy of $D$.\\
	(7) We prove for the pair $\{N_{\{a\}},N_{\{b,c\}}\}$. The proof of the other cases is symmetrical.
	Assume first that there are $x_1,x_2 \in N_{\{a\}}$ and $y \in N_{\{b,c\}}$ such that $(x_1,x_2) \in E(G)$, $(x_1,y) \in E(G)$ and $(y,x_2) \in E(G)$.
	Then $\{x_1,x_2,a,y\}$ induce a copy of $D$.
	Analogously, assume that there are $x \in N_{\{a\}}$ and $y_1,y_2 \in N_{\{b,c\}}$ such that
	$(y_1,y_2) \in E(G)$, $(y_1,x) \in E(G)$ and $(x,y_2) \in E(G)$.
	Then $\{y_1,y_2,b,x\}$ induce a copy of $D$.
	
	We next prove that if items 1-7 hold, then $G$ is $D$-free. Let $x \in V(G)$ and let $S(x)$ denote the out-neighbors of $x$. We need to show that $S(x)$ induces a transitive tournament. There are four cases.\\
	(1) $x \in \{a,b,c\}$. By symmetry it is enough to prove for $x = a$.
	We have $S(a)=\{b\} \cup N_{\{b\}} \cup N_{\{c\}} \cup N_{\{b,c\}} \cup N_\emptyset$.
	We have $N_{\{b\}} \rightarrow \{b\}$, $N_{\{b,c\}} \rightarrow \{b\}$, $\{b\} \rightarrow N_{\{c\}}$
	and $\{b\} \rightarrow N_\emptyset$.
	By Item 2 we have $N_{\{b\}} \rightarrow N_\emptyset$, $N_{\{b,c\}} \rightarrow N_\emptyset$,
	$N_{\{c\}} \rightarrow N_\emptyset$.
	By Item 4 we have $N_{\{b\}} \rightarrow N_{\{c\}}$
	and by Item 6 we have $N_{\{b\}} \rightarrow N_{\{b,c\}} \rightarrow N_{\{c\}}$.
	By Item 3 the sets $N_{\{b\}}, N_{\{c\}}, N_{\{b,c\}}, N_\emptyset$ are transitive.
	So the ordering $N_{\{b\}} \rightarrow N_{\{b,c\}} \rightarrow \{b\} \rightarrow N_{\{c\}} \rightarrow 
	N_\emptyset$ is a transitive ordering of $S(a)$.\\
	(2) $x \in N_\emptyset$. Item 2 implies that $S(x) \subset N_\emptyset$ which is transitive by Item 3.\\
	(3) $x \in N_{\{a\}} \cup N_{\{b\}} \cup N_{\{c\}}$. By symmetry it is enough to prove for $x \in N_{\{a\}}$.
	By items 1,2,4,6 we have $S(x) = \{a\} \cup N_{\{b\}} \cup N_{\{a,b\}} \cup N_\emptyset \cup Y \cup Z$
	where $Y = N_{\{a\}} \cap S(x)$ and $Z = N_{\{b,c\}} \cap S(x)$.
	All the sets in this union are transitive.
	By Item 2 we have $N_{\{a,b\}}, N_{\{b\}}, Y, Z \rightarrow N_\emptyset$.
	By Item 6 we have $N_{\{a,b\}} \rightarrow N_{\{b\}}$.
	By Item 7 we have $Y \rightarrow Z$.
	By the definition of $Y,Z$ we have $Y \rightarrow\{a\} \rightarrow Z$.
	By items 4,5,6 we have $Y \rightarrow N_{\{b\}}, N_{\{a,b\}}$ and $N_{\{b\}}, N_{\{a,b\}} \rightarrow Z$. So the ordering $Y \rightarrow N_{\{a,b\}}  \rightarrow \{a\} \rightarrow N_{\{b\}} \rightarrow Z \rightarrow N_\emptyset$ is a transitive ordering of $S(x)$.\\
	(4)  $x \in N_{\{a,b\}} \cup N_{\{b,c\}} \cup N_{\{c,a\}}$. By symmetry it is enough to prove for $x \in N_{\{a,b\}}$. By Items 1,2,5,6 we have $S(x) = \{a,b\} \cup N_{\{b\}} \cup N_{\{b,c\}} \cup N_\emptyset \cup Y \cup Z$ 	where $Y = N_{\{a,b\}} \cap S(x)$ and $Z = N_{\{c\}} \cap S(x)$.
	All the sets in this union are transitive.
	By Item 2 we have $N_{\{b\}}, N_{\{b,c\}}, Y, Z \rightarrow N_\emptyset$.
	By Item 6 we have $N_{\{b\}} \rightarrow N_{\{b,c\}}$.
	By the definition of $Y,Z$ we have $Y \rightarrow\{a,b\} \rightarrow Z$.
	By items 4,5,6 we have $Y \rightarrow N_{\{b\}}, N_{\{b,c\}}$ and $N_{\{b\}}, N_{\{b,c\}} \rightarrow Z$.
	By Item 7 we have $Y \rightarrow Z$.
	So the ordering $Y \rightarrow \{a\} \rightarrow N_{\{b\}} \rightarrow N_{\{b,c\}} \rightarrow \{b\} \rightarrow Z \rightarrow N_\emptyset$ is a transitive ordering of $S(x)$.\\
\end{proof}

\begin{lemma}\label{l:detect-D}
	$d(D)=2$.
\end{lemma}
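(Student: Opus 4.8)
The plan is to prove the two inequalities separately. For the lower bound $d(D) \ge 2$, I would use an adversary (decision-tree) argument. The adversary answers each queried edge so as to keep the revealed orientation \emph{acyclic}, i.e.\ extendable to a transitive tournament; this is always possible, since for an acyclic partial orientation at most one of the two orientations of a newly queried pair can close a directed cycle, so the adversary picks an orientation that does not. Completing every unrevealed pair consistently with a topological order of the revealed edges then yields a transitive --- hence $D$-free --- completion. I would then show that a $D$-\emph{containing} completion also survives as long as many edges are unrevealed. Fix a topological order $\prec$ of the revealed edges at the moment the algorithm halts, and call an unrevealed pair $\{x,z\}$ with $x \prec z$ \emph{usable} if there is a vertex $w$ with $w \prec x$ and a vertex $y$ with $x \prec y \prec z$. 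For a usable pair one builds a copy of $D$ on $\{w,x,y,z\}$ by orienting $z \to x$ and orienting all of $w\to x, w\to y, w\to z, x\to y, y\to z$ forward in $\prec$ (these agree with any edge already revealed, while $\{x,z\}$ is unrevealed): the triangle $x\to y\to z\to x$ together with the source $w$ is exactly $D$. The only non-usable unrevealed pairs are those containing the $\prec$-minimum vertex or consecutive in $\prec$, of which there are at most $2(n-1)$. Hence if more than $2(n-1)$ pairs are unrevealed, both a $D$-free and a $D$-containing completion are consistent with all answers given, so the algorithm cannot yet be correct. Therefore any correct algorithm must query $\binom{n}{2}-O(n)=\Omega(n^2)$ edges.

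For the upper bound $d(D) \le 2$ I would turn Lemma \ref{l:D-free} into an algorithm. First decide in $O(n^2)$ time whether $G$ is transitive; if it is, then $G$ has no directed triangle and hence no $D$. Otherwise I find a single directed triangle $\{a,b,c\}$ in $O(n^2)$ time and, by reading the three edges from each remaining vertex to $a,b,c$, compute the partition into the sets $N_S$ in $O(n)$ time. It then remains only to verify conditions (1)--(7) of Lemma \ref{l:D-free}; by that lemma all of them hold if and only if $G$ is $D$-free. Conditions (1)--(6) are a constant number of ``$N_S$ is transitive'' tests and ``$X \to Y$'' (all cross-edges in one direction) tests, each clearly checkable in $O(n^2)$ time.

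The step I expect to be the main obstacle is condition (7), the absence of a \emph{bad triple} in the pairs $\{N_{\{a\}},N_{\{b,c\}}\}$, $\{N_{\{b\}},N_{\{a,c\}}\}$, $\{N_{\{c\}},N_{\{a,b\}}\}$, since a naive search over all triples costs $\Theta(n^3)$. Here I would exploit that, condition (3) having already been verified, every set $N_S$ is transitive and hence linearly ordered. Consider a pair $X,Y$ and search for a bad triple whose source and sink lie in $X$ (the symmetric case is identical). Writing $X$ in its transitive order $x^{(1)} \to x^{(2)} \to \cdots$, a bad triple with middle vertex $y \in Y$ exists precisely when $y$ has an in-neighbour $x^{(i)}$ and an out-neighbour $x^{(j)}$ in $X$ with $i<j$, that is, when $\min\{i : x^{(i)} \to y\} < \max\{j : y \to x^{(j)}\}$. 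Computing these two indices for each $y \in Y$ takes $O(|X|)$ time, so the whole test runs in $O(|X|\,|Y|)=O(n^2)$ time. Verifying all conditions thus costs $O(n^2)$, giving $d(D)\le 2$, and together with the lower bound $d(D)=2$.
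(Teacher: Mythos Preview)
Your proof is correct. The upper bound follows the paper's approach essentially verbatim: both verify the characterization of Lemma~\ref{l:D-free} algorithmically, with conditions (1)--(6) being routine $O(n^2)$ checks and condition (7) requiring a more careful $O(|X|\cdot|Y|)$ test exploiting that $X$ and $Y$ are already known to be transitive. Your min/max-index test is in fact equivalent to the paper's cumulative-count test: the paper maintains, for each $y_m$, the prefix count $\alpha_{i,m}$ of in-neighbors of $y_m$ in $\{x_1,\dots,x_i\}$ and the suffix count $\beta_{i,m}$ of out-neighbors in $\{x_{i+1},\dots,x_p\}$, and looks for an $i$ with both positive --- which is exactly your condition $\min\{i:x^{(i)}\to y\}<\max\{j:y\to x^{(j)}\}$.

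For the lower bound the two arguments diverge. The paper uses the same device as in Lemma~\ref{l:detect-c4}: if at least $n+1$ pairs remain unexamined, then the unexamined pairs (as an undirected graph) contain a path $a\text{--}b\text{--}c\text{--}d$, and one checks by hand that whatever the orientations of the three ``diagonals'' $ac,ad,bd$ are, the three path edges $ab,bc,cd$ can be oriented to create a copy of $D$ on $\{a,b,c,d\}$ (the $D$-free side is implicitly supplied by the transitive completion). Your adversary instead answers adaptively so as to keep the revealed orientation acyclic, then finds a usable unrevealed pair via the topological order and flips it to plant a $D$. Both arguments are sound and yield the same $\binom{n}{2}-O(n)$ query bound; the paper's $P_4$ trick is shorter and recycles the $X_4$ argument wholesale, while yours makes the adversary strategy and the two competing completions fully explicit.
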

\begin{proof}
	Suppose we have yet to examine at least $n+1$ edges of the input tournament $G$ and that we have still not found a copy of $D$. 
	Then, there are four vertices $a,b,c,d$ such that none of the edges on pairs $ab$, $bc$, $cd$, have been examined. No matter how the other three (possibly examined) edges on the pairs  $ac$, $ad$, $bd$ are oriented, it is immediate to check that it is possible to orient the edges $ab$, $bc$, $cd$ so that $\{a,b,c,d\}$ induce a copy of $D$.
	Thus, we might need to examine at least $\binom{n}{2}-n$ edges to  determine whether $G$ contain a copy of $D$, implying that $d(D) \ge 2$.
	
	For the upper bound, we will use Lemma \ref{l:D-free}. We first check whether $G$ is transitive, and if not, exhibit a directed triangle on vertices $a,b,c$ with with $(a,b),(b,c),(c,a) \in E(G)$.
	This is straightforward to do in $O(n^2)$ time. Next, for each $S \subseteq \{a,b,c\}$  we construct each of the eight sets $N_S$. Again, this is straightforward to do in $O(n^2)$ time.
	Each of the items 1-6 in the statement of Lemma \ref{l:D-free} can be easily checked to hold in $O(n^2)$ time.
	The only item for which it is not obvious is Item 7. For this we proceed as follows.
	Suppose $X$ and $Y$ are two disjoint sets of vertices, where both $X$ and $Y$ are transitive.
	Suppose also that $X=\{x_1,\ldots,x_p\}$ and $Y=\{y_1,\ldots,y_q\}$ where $(x_i,x_j) \in E(G)$ whenever
	$i < j$ and $(y_i,y_j) \in E(G)$ whenever $i < j$. We must therefore determine whether there exist $x_i,x_j,y_m$ with $i < j$ and with $(x_i,y_m) \in E(G)$ and $(y_m,x_j) \in E(G)$ (symmetrically, we must also determine whether there exist $y_i,y_j,x_m$ with $i < j$ and with $(y_i,x_m) \in E(G)$ and $(x_m,y_j) \in E(G)$, so we only consider the former case). We show how to do this in $O(pq) \le O(n^2)$ time.
	Let $X_i=\{x_1,\ldots,x_i\}$. For each $1 \le m \le q$ let $\alpha_{i,m}$ be the number of in-neighbors of $y_m$ in $X_i$ and let $\beta_{i,m}$ be the number of out-neighbors of $y_m$ in $X \setminus X_i$.
	We must therefore decide if there exists some $(i,m)$ for which both $\alpha_{i,m}$ and $\beta_{i,m}$
	are positive. Computing $\alpha_{1,m}$ and $\beta_{1,m}$ for all $m$ is done in $O(pq)$ time by scanning
	all edge between $X$ and $Y$. Having computed $\alpha_{i-1,m}$ and $\beta_{i-1,m}$, we can compute
	$\alpha_{i,m}$ and $\beta_{i,m}$ in $O(q)$ time as follows. If $(x_i,y_m) \in E(G)$ then
	$\alpha_{i,m} = 1 + \alpha_{i-1,m}$ and $\beta_{i,m} = \beta_{i-1,m}$.
	If $(y_m,x_i) \in E(G)$ then
	$\alpha_{i,m} = \alpha_{i-1,m}$ and $\beta_{i,m} = \beta_{i-1,m}-1$. Thus, we can determine if Item 7
	holds in $O(n^2)$ time. By Lemma \ref{l:D-free}, it follows that we can determine if $G$ has a copy of $D$ in $O(n^2)$ time, showing that $d(D) \le 2$.
\end{proof}

We now proceed to the counting problem. We will use a method similar to the subgraph-equation method of
\cite{KKM-2000}. Recall that $d^+(v)$ and $d^-(v)$ respectively denote the out-degree and in-degree of $v$. For two distinct vertices $u,v$ let $d^+(u,v)$ denote their common out-degree, that is the number of vertices $w$ such that $(u,w) \in E(G)$ and $(v,w) \in E(G)$. Analogously, let $d^-(u,v)$ denote their common
in-degree. Finally, let $p(u,v)$ denote the number of paths of length $2$ from $u$ to $v$, that is 
the number of vertices $w$ such that $(u,w) \in E(G)$ and $(w,v) \in E(G)$. Observe that $p(v,u)$ may differ from $p(u,v)$. Let $A^+$ be the $0/1$ adjacency matrix of $G$ with $A^+(i,j)=1$ if and only if $(i,j) \in E(G)$.
Let $A^-=J-I-A^+$ be the $0/1$ ``incoming'' adjacency matrix of $G$ with $A^-(i,j)=1$ if and only if $(j,i) \in E(G)$ (here $J$ denotes the all-one $n \times n$ matrix and $I$ denotes the $n \times n$ identity matrix).
\begin{lemma}\label{l:four-count}
	$c(T) \le \omega$ for each tournament $T$ on four vertices.
\end{lemma}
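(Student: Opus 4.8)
The plan is to derive, for each of the four four-vertex tournaments $T_4,D,D^T,X_4$, an exact expression for its number of copies (write $\#T$ for the number of copies of $T$ in $G$) in terms of quantities computable in $O(n^\omega)$ time, and then solve a tiny linear system. Everything will reduce to the pairwise quantities $d^+(u,v)$ and $d^-(u,v)$, which are the entries of $A^+(A^+)^T$ and $A^-(A^-)^T$ and hence obtainable in $O(n^\omega)$ time, together with the out- and in-degrees (available in $O(n^2)$).

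First I would record two ``free'' identities. Since every $4$-subset of $V(G)$ induces exactly one of the four tournaments, $\#T_4+\#D+\#D^T+\#X_4=\binom n4$. Counting directed triangles inside a $4$-subset, a $T_4$ contains none, each of $D,D^T$ contains exactly one, and $X_4$ contains two; as every directed triangle of $G$ lies in $n-3$ four-subsets, summing over all $4$-subsets gives $\#D+\#D^T+2\#X_4=(n-3)\,\#C_3$, where $\#C_3=\binom n3-\sum_{v}\binom{d^+(v)}{2}$ is computed in $O(n^2)$.

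The crux is a direct formula for $\#D$. A copy of $D$ is exactly a vertex $v$ together with a directed triangle lying in $N^+(v)$: the source dominates the triangle, and both the source and the triangle are uniquely determined by the copy. Hence $\#D=\sum_v \#C_3(G[N^+(v)])$. Writing $\#C_3(S)=\binom{|S|}{3}-\#T_3(S)$ and charging each transitive triangle of $S$ to its source yields $\#T_3(N^+(v))=\sum_{u\in N^+(v)}\binom{d^+_{N^+(v)}(u)}{2}$, and the key observation is the identity $d^+_{N^+(v)}(u)=d^+(u,v)$ whenever $v\to u$ (a common out-neighbor of $u$ and $v$ is precisely an out-neighbor of $u$ that lies in $N^+(v)$). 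Summing over $v$ collapses the inner sum into a sum over edges, giving $\#D=\sum_v\binom{d^+(v)}{3}-\sum_{\{u,v\}\in E}\binom{d^+(u,v)}{2}$, which costs $O(n^\omega)$ once $A^+(A^+)^T$ is formed. By transposing $G$ (so $d^+$ becomes $d^-$), $\#D^T=\sum_v\binom{d^-(v)}{3}-\sum_{\{u,v\}\in E}\binom{d^-(u,v)}{2}$, again $O(n^\omega)$.

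Finally I would recover the remaining two counts: from the triangle identity, $\#X_4=\tfrac12\big[(n-3)\#C_3-\#D-\#D^T\big]$ (equivalently, and independently, $\#X_4=\tfrac14\operatorname{tr}((A^+)^4)$, using that in a tournament every closed $4$-walk is a genuine directed $4$-cycle and that each $X_4$ carries exactly one such cycle), and then $\#T_4=\binom n4-\#D-\#D^T-\#X_4$. Every step runs in $O(n^\omega)$, proving $c(T)\le\omega$ for all four patterns. I expect the main obstacle to be the combinatorial bookkeeping behind the $\#D$ formula --- correctly identifying the local quantity $d^+_{N^+(v)}(u)$ with the globally computed $d^+(u,v)$ and checking that each copy of $D$ is counted exactly once --- rather than any algorithmic difficulty, since once the formula is in hand the computation is just one additional matrix multiplication.
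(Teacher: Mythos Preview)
Your proof is correct and complete: the identity $d^+_{N^+(v)}(u)=d^+(u,v)$ for $v\to u$ is exactly right, the bookkeeping for $\#D$ goes through, the triangle-extension identity $\#D+\#D^T+2\#X_4=(n-3)\#C_3$ is easily verified by inspecting the four tournaments, and your aside that every closed directed $4$-walk in a tournament must visit four distinct vertices (hence sit inside an $X_4$) is also valid.

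The paper takes a somewhat different route. It sets up four linear equations in the unknowns $\#T_4,\#X_4,\#D,\#D^T$: the identity $\#T_4=\sum_{(u,v)\in E}\binom{d^+(u,v)}{2}$, the identities $\#T_4+\#X_4=\sum_{(u,v)\in E} d^+(u,v)d^-(u,v)$ and $\#T_4+\#D=\sum_{(u,v)\in E} d^+(u,v)p(u,v)$ (where $p(u,v)$ is the number of $2$-paths from $u$ to $v$, read off from $(A^+)^2$), together with the total $\binom n4$; it then observes the coefficient matrix is triangular and solves. So the paper works entirely at the level of ordered pairs and needs the third product $(A^+)^2$, whereas you work one level up, fixing a vertex and counting $C_3$'s in its out-neighborhood, and thereby get away with only $A^+(A^+)^T$ and $A^-(A^-)^T$. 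In fact your second sum $\sum_{\{u,v\}}\binom{d^+(u,v)}{2}$ is precisely the paper's formula for $\#T_4$, and $\sum_v\binom{d^+(v)}{3}=\#T_4+\#D$, so your $\#D$ formula and the paper's are secretly the same equation rearranged. What your approach buys is one fewer matrix product and a cleaner combinatorial picture for $\#D$; what the paper's approach buys is a uniform template (edge-indexed polynomials in $d^+,d^-,p$) that it then reuses verbatim for the five-vertex case in the next section.
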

\begin{proof}
	Let $u,v$ be a pair of distinct vertices. Consider the $(u,v)$ entry of the following three integer matrix products:
	$(A^+)(A^+)^T$, $(A^-)(A^-)^T$, $(A^+)^2$.
	The first determines $d^+(u,v)$, the second determines $d^-(u,v)$ and the third determines $p(u,v)$.
	Thus, all of these values, for all pairs of vertices, can be computed in $O(n^\omega)$ time.
	
	Consider some copy of $T_4$ in $G$, say on vertices $u,v,w,x$ with $(u,w),(u,x),(v,w),(v,x) \in E(G)$. Observe that $w$ and $x$ are in the common out-neighborhood of $u,v$. Hence,
	$\binom{d^+(u,v)}{2}$ counts this $T_4$ copy precisely once.
	Denoting by $\#T$ the number of copies of $T$ in $G$ it follows that
	\begin{equation}\label{e:t4}
		\#T_4 = \sum_{(u,v) \in E(G)} \binom{d^+(u,v)}{2}\;.
	\end{equation}
	Next, consider four vertices $u,v,w,x$ with $(u,w),(v,w),(x,u),(x,v) \in E(G)$. No matter how the edge
	between $u$ and $v$ is oriented, we have that if $(x,w) \in E(G)$ then these four vertices induce a $T_4$ and if $(w,x) \in E(G)$ then they induce an $X$. It is also easy to verify that every copy of $T_4$ 
	has precisely one such pair $u,v$ and every copy of $X$ has one such pair $u,v$.
	We therefore have that 
	\begin{equation}\label{e:t4+x}
		\#T_4 +\#X= \sum_{(u,v) \in E(G)} {d^+(u,v)}\cdot d^-(u,v)\;.
	\end{equation}
	Next, consider four vertices $u,v,w,x$ with $(u,v),(u,w),(v,w),(u,x),(x,v) \in E(G)$. If $(x,w) \in E(G)$ then these four vertices induce a $T_4$ and if $(w,x) \in E(G)$ then they induce a $D$. It is also easy to verify that every copy of $T_4$  has precisely one such ordered pair $u,v$ and every copy of $D$ has one such ordered pair $u,v$ (in both $T_4$ and $D$, $u$ plays the role of the source vertex).
	We therefore have that 
	\begin{equation}\label{e:t4+d}
		\#T_4 +\#D= \sum_{(u,v) \in E(G)} {d^+(u,v)}\cdot p(u,v)\;.
	\end{equation}
	Finally, we have the obvious equation	
	\begin{equation}\label{e:all-four}
		\#T_4 +\#X + \#D +\#D^T= \binom{n}{4}\;.
	\end{equation}
	Equations \eqref{e:t4}, \eqref{e:t4+x}, \eqref{e:t4+d}, \eqref{e:all-four} form a system of linear equations in the variables $\#T_4, \#X, \#D, \#D^T$ whose coefficient matrix is
	\[ \begin{pmatrix}	1 & 0 & 0 & 0 \\ 1 & 1 & 0 & 0 \\ 1 & 0 & 1 & 0 \\ 1 & 1 & 1 & 1 \end{pmatrix} \]
	which is nonsingular. Thus, $\#T$ for each tournament $T$ on four vertices can be computed in $O(n^\omega)$ time. It follows that $c(T) \le \omega$.
\end{proof}

\begin{proof}[Proof of Theorem \ref{t:four}]
	By Lemma \ref{l:detect-c4} and Lemma \ref{l:detect-D} we have that $d(T)=2$ for each tournament on four vertices other than $T_4$ (for which $d(T_4)=0$). Thus, $d(4)=2$.
	By Lemma \ref{l:four-count}, $c(4) \le \omega$.
\end{proof}

\section{Counting five-vertex tournaments}\label{sec:five}

Figure \ref{f:tour-5} lists and names the twelve tournaments on five vertices. Observe that $H_1^T$ and $H_2^T$ are the transpose of $H_1$ and $H_2$, respectively. The other eight tournaments are isomorphic to their transpose. The unique regular tournament on five vertices is denoted by $R_5$. Also notice
that $H_4$ and $H_5$ (while non-isomorphic) have the same out-degree sequence and $H_6,H_7,H_8$ also
have the same out-degree sequence.

\begin{figure}[t]
	\includegraphics[scale=0.6,trim=34 200 150 38, clip]{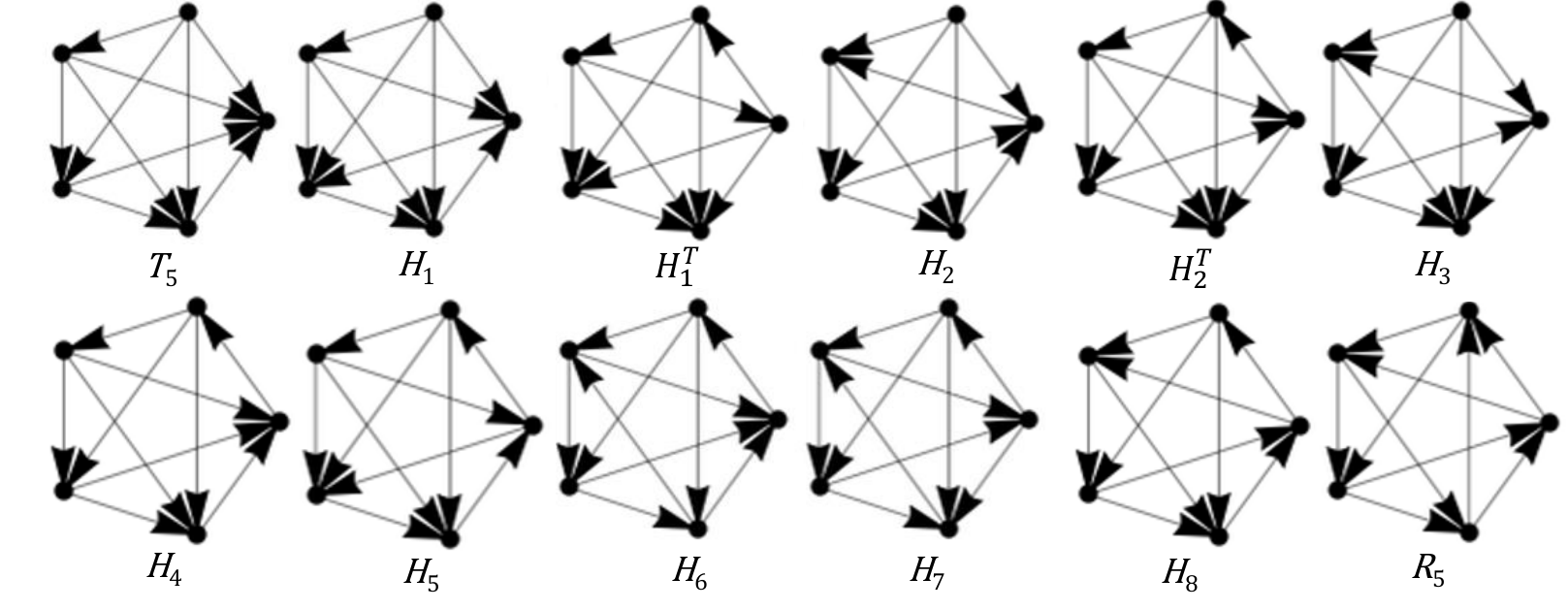}
	\caption{The tournaments on $5$ vertices.}
	\label{f:tour-5}
\end{figure} 

As in the previous section, for two distinct vertices $u,v$ in a tournament, let $d^+(u,v)$ denote their common out-degree, let $d^-(u,v)$ denote their common in-degree and let $p(u,v)$ denote the number of paths of length $2$ from $u$ to $v$. We can obtain a system of $20=\binom{4}{1}+\binom{4}{1}\cdot\binom{3}{1} + \binom{4}{3}$ linear equations whose right hand side equals a polynomial of degree $3$ in $p(u,v),p(v,u),d^+(u,v)$ and $d^-(u,v)$, see Table \ref{table:1}.
The first column in Table \ref{table:1} is the equation number. The twelve intermediate columns correspond
to variables, one for each count of a tournament on five vertices. The entries are coefficients of these variables in the corresponding linear equation (so the inner $20 \times 12$ cells correspond to the coefficient matrix of the linear system; to avoid clutter, zero coefficients are not listed). The right column in Table \ref{table:1} corresponds to the right hand side of the linear equation. For an entry $P$ in that column, it reads as $\sum_{(u,v) \in E(G)} P$.

\begin{table}[ht!]
	\centering
	\renewcommand{\arraystretch}{1.4}
	\begin{tabular}{c||cccccccccccc||c}
		eq  & \hspace{-8pt}  $\#T_5$ \hspace{-12pt} & $\#H_1$  \hspace{-12pt} & $\#H_1^T$  \hspace{-12pt} & $\#H_2$  \hspace{-12pt} & $\#H_2^T$  \hspace{-12pt} & $\#H_3$  \hspace{-12pt} & $\#H_4$  \hspace{-12pt} & $\#H_5$  \hspace{-12pt} & $\#H_6$  \hspace{-12pt} & $\#H_7$  \hspace{-12pt} & $\#H_8$  \hspace{-12pt} & $\#R_5  \hspace{-4pt}$ & $=\sum_{(u,v) \in E(G)}$ of \\
		\hline
		$1$ & $1$ & $1$ & & & & & & & & & & & $\binom{d^+(u,v)}{3}$ \\
		$2$ & $1$ & & $1$ & & & & & & & & & & $\binom{d^-(u,v)}{3}$ \\
		$3$ & & & & & & & $1$ & & $1$ & & & & $\binom{p(v,u)}{3}$ \\
		$4$ & $1$ & & & & & $1$ & & & & & & & $\binom{p(u,v)}{3}$ \\
		$5$ & $1$ & & & & $1$ & & $1$ & & & & $1$ & & $\binom{d^+(u,v)}{2}d^-(u,v)$ \\
		$6$ & $1$ & & & $1$ & & & $1$ & & & & $1$ & & $\binom{d^-(u,v)}{2}d^+(u,v)$ \\
		$7$ & $1$ & & & $2$ & & $3$ & & & & & & & $\binom{d^+(u,v)}{2}p(u,v)$ \\
		$8$ & & & $3$ & & $1$ & & $1$ & $1$ & & & & & $\binom{d^+(u,v)}{2}p(v,u)$ \\
		$9$ & $1$ & & & & $2$ & $3$ & & & & & & & $\binom{d^-(u,v)}{2}p(u,v)$ \\
		$10$ & & $3$ & & $1$ & & & $1$ & $1$ & & & & & $\binom{d^-(u,v)}{2}p(v,u)$ \\
		$11$ & & & & & & & $2$ & $1$ & & & $1$ & & $\binom{p(u,v)}{2}p(v,u)$ \\
		$12$ & $1$ & & $3$ & & $2$ & & & & & & & & $\binom{p(u,v)}{2}d^-(u,v)$ \\
		$13$ & $1$ & $3$ & & $2$ & & & & & & & & & $\binom{p(u,v)}{2}d^+(u,v)$ \\
		$14$ & & & & & & & & & & $1$ & $2$ & $5$ & $\binom{p(v,u)}{2}p(u,v)$ \\
		$15$ & & & & $1$ & & & & $1$ & & $1$ & $1$ & & $\binom{p(v,u)}{2}d^-(u,v)$ \\
		$16$ & & & & & $1$ & & & $1$ & & $1$ & $1$ & & $\binom{p(v,u)}{2}d^+(u,v)$\\
		$17$ & $1$ & $3$ & $3 $ & & & & $1$ & $3$ & & $1$ & & & $d^+(u,v)d^-(u,v)p(u,v)$ \\
		$18$ & & & & $1$ & $1$ & $3$ & & & $3$ & $2$ & $1$ & $5$ & $d^+(u,v)d^-(u,v)p(v,u)$ \\
		$19$ & & & & & $2$ & & $1$ & $1$ & $3$ & $2$ & $1$ & & $d^+(u,v)p(u,v)p(v,u)$\\
		$20$ & & & & $2$ & & & $1$ & $1$ & $3$ & $2$ & $1$ & & $d^-(u,v)p(u,v)p(v,u)$
	\end{tabular} 
\caption{Linear equations involving counts of copies of the twelve tournaments on five vertices.}\label{table:1}
\end{table} 

While a bit lengthy, the correctness of each equation is easy to verify. Let us consider a representative example from each r.h.s. ``type'', e.g. Equations $1,7,18$ of Table \ref{table:1}.
For Equation $1$, we are counting those five-vertex tournaments in $G$ having two vertices $u,v$ that dominate
(i.e. send edges to) three other vertices. Now, these three other vertices may induce a $T_3$,
in which case the tournament counted is $T_5$, or may induce a $C_3$, in which case the tournament counted is $H_1$. Hence, $\#T_5+\#H_1 = \sum_{(u,v) \in E(G)}\binom{d^+(u,v)}{3}$.
In Equation $7$ we are counting five-vertex tournaments in $G$ that have two vertices $u,v$ with $(u,v)$ an edge, two additional vertices dominated by both $u$ and $v$, and an additional vertex
dominated by $u$ and that dominates $v$. Namely, tournaments that contain as a spanning subgraph the directed
graph $X$ depicted in the left side of Figure \ref{f:subgraphs}. Notice that $X$ is a spanning subgraph appearing once in $T_5$, twice in $H_2$, and three times in $H_3$. Hence,
$\#T_5+2\#H_2+3\#H_3=\sum_{(u,v) \in E(G)}\binom{d^+(u,v)}{2}p(u,v)$.
For Equation $18$, we are counting five-vertex tournaments in $G$ that have two vertices $u,v$ with $(u,v)$ an edge, one additional vertex dominated by both $u$ and $v$, one additional vertex dominating both $u$ and $v$ and one additional vertex
dominated by $v$ and that dominates $u$. Namely, tournaments that contain as a spanning subgraph the directed
graph $Y$ depicted in the right side of Figure \ref{f:subgraphs}. Notice that $Y$ is a spanning subgraph appearing once in each of $H_2$, $H_2^T$, $H_8$, twice in $H_7$, three time in each of $H_3,H_6$ and five times in $R_5$.  
Hence, $\#H_2+\#H_2^T+\#H_8+2\#H_7+3\#H_3+3\#H_6+5\#R_5=\sum_{(u,v) \in E(G)}d^+(u,v)d^-(u,v)p(v,u)$.

\begin{figure}[t]
	\includegraphics[scale=0.8,trim=-20 380 250 117, clip]{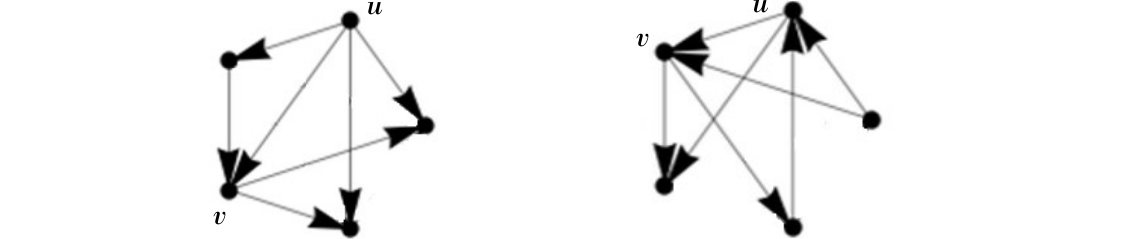}
	\caption{The directed graphs corresponding to the r.h.s. of Equation 7 (left) and 18 (right).}
	\label{f:subgraphs}
\end{figure} 

As shown in the first paragraph of the proof of Lemma \ref{l:four-count}, the values in the r.h.s. of Table \ref{table:1} can all be computed in $O(n^\omega)$ time. Unfortunately, however, the $20 \times 12$ coefficient matrix, denoted hereafter by $A$, only has rank $10$, so we cannot guarantee a unique solution.
Our approach would be to show that some submatrix of $A$ has full column rank, while for columns not in that sub-matrix, we shall count the corresponding tournaments differently.
We shall first need the following lemma, which is applied to the case $k=5$.
\begin{lemma}\label{l:count-extension}
	Let $T$ be a tournament on $k$ vertices having a dominating vertex (a source) or having a dominated vertex (a sink). Then $c(T) \le 1 + c(k-1)$.
\end{lemma}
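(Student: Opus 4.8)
The plan is to exploit the structure forced by a source or a sink in order to peel off one vertex and recurse on a $(k-1)$-vertex sub-tournament. Assume first that $T$ has a source $s$, i.e.\ a vertex dominating all others; the sink case is handled symmetrically at the end by passing to the transpose tournament $G^T$, which costs only $O(n^2)$ time and turns a sink of $T$ into a source of $T^T$. The crucial observation is that a tournament has \emph{at most one} source: if $s_1, s_2$ were both sources, the edge between them would have to point both ways. Hence in every copy of $T$ the vertex playing the role of $s$ is uniquely determined, and this is exactly what will let us decompose the count with no overcounting.

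Write $T' = T - s$ for the $(k-1)$-vertex sub-tournament obtained by deleting the source. First I would record the following equivalence: a subset $\{v\} \cup S$ of $V(G)$ with $|S| = k-1$ induces a copy of $T$ having $v$ as its source if and only if $S \subseteq N^+(v)$ and $S$ induces a copy of $T'$. Indeed, if $S \subseteq N^+(v)$ then $v$ dominates all of $S$ and therefore has out-degree $k-1$ inside $\{v\} \cup S$, so it is forced to play the (unique) source role, while the remaining vertices induce exactly $T'$; the converse is immediate. Combining this with the uniqueness of the source gives the exact identity
\[
\#T \;=\; \sum_{v \in V(G)} \#_{T'}\bigl(G[N^+(v)]\bigr),
\]
where $\#_{T'}(H)$ denotes the number of copies of $T'$ in the tournament $H$ and $G[N^+(v)]$ is the sub-tournament induced on the out-neighbourhood of $v$. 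Every copy of $T$ is counted precisely once on the right, in the summand indexed by its unique source.

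For the running time, each $G[N^+(v)]$ has at most $n$ vertices and can be extracted in $O(n^2)$ time, which is subsumed by the final bound. Since $T'$ is a tournament on $k-1$ vertices, for every $\varepsilon > 0$ the quantity $\#_{T'}\bigl(G[N^+(v)]\bigr)$ can be computed in $O(n^{c(k-1)+\varepsilon})$ time by the definition of $c(k-1)$. Running this subroutine once for each of the $n$ vertices $v$ yields total time $O(n^{1+c(k-1)+\varepsilon})$ for every $\varepsilon > 0$, whence $c(T) \le 1 + c(k-1)$.

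I do not expect a genuine obstacle here: the whole argument rests on the single structural fact that the source (equivalently, the sink) is unique, which makes the vertex-deletion decomposition a bijection rather than a many-to-one map. The only points that warrant a moment's care are verifying that $\{v\} \cup S$ really induces $T$ and not merely a tournament with the same out-degree sequence — which follows because $v$'s out-degree pins it as the source and $S$ induces $T'$ — and handling the sink case, which reduces to the source case via transposition.
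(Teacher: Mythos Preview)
Your proof is correct and follows essentially the same approach as the paper: peel off the source (or sink), and sum over all $v \in V(G)$ the number of copies of $T' = T - s$ in the sub-tournament induced on $N^+(v)$. The paper's proof is slightly more terse and handles the sink case by saying ``symmetrical'' rather than passing to the transpose, but the underlying argument is identical.
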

\begin{proof}
	Assume that $T$ has a dominating vertex, denoted by $a$ (the case of $T$ having a sink vertex is symmetrical). Let $T^*$ be the sub-tournament obtained from $T$ after removing $a$.
	For a given tournament $G$, and a given vertex $v \in V(G)$, let $G_v$ denote the sub-tournament
	of $G$ induced by the out-neighbors of $v$. Observe that $T$-COUNT for the instance $G$ can be solved
	by summing for all $v \in V(G)$ the results of ${T^*}$-COUNT for the corresponding instance $G_v$.
	Indeed, each copy of $T$ in $G$ maps $a$ to some vertex $v \in V(G)$ and maps the remaining vertices to a unique copy of $T^*$ in $G_v$. As $T^*$ is a tournament on $k-1$ vertices, it follows that
	$c(T) \le 1+c(k-1)$.
\end{proof}
\begin{corollary}\label{coro:1}
	For each $T \in \{T_5, H_1, H_1^T, H_2, H_2^T, H_3 \}$ it holds that $c(T) \le \omega+1$.
\end{corollary}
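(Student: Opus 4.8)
The plan is to derive this corollary directly from Lemma \ref{l:count-extension} together with the bound $c(4) \le \omega$ established in Theorem \ref{t:four}. Since Lemma \ref{l:count-extension} asserts that any $k$-vertex tournament with a source or a sink satisfies $c(T) \le 1 + c(k-1)$, the entire statement reduces, in the case $k=5$, to a single structural observation: each of the six tournaments $T_5, H_1, H_1^T, H_2, H_2^T, H_3$ possesses either a dominating vertex (source) or a dominated vertex (sink). Granting this, Lemma \ref{l:count-extension} gives $c(T) \le 1 + c(4)$ for each such $T$, and Theorem \ref{t:four} yields $c(4) \le \omega$, so that $c(T) \le 1 + \omega = \omega + 1$, which is exactly the claim.

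Thus the only real work is verifying the structural fact, which I would do case by case. First, $T_5$ is transitive and hence has both a source and a sink. For the next two I would read the structure off the defining equations of Table \ref{table:1} rather than Figure \ref{f:tour-5}: Equation $1$ counts five-vertex configurations in which two vertices $u,v$ with $(u,v)\in E(G)$ have three common out-neighbors, so $u$ dominates $v$ together with those three vertices and is therefore a source; since the three common out-neighbors induce either a $T_3$ (yielding $T_5$) or a $C_3$ (yielding $H_1$), this shows $H_1$ has a source, and transposing shows $H_1^T$ has a sink. For the remaining three I would use the fact that the five-vertex tournaments admitting a source are precisely those obtained by adjoining a dominating vertex to one of the four four-vertex tournaments $T_4, X_4, D, D^T$, and the tournaments admitting a sink are their transposes; matching these extensions against the names in Figure \ref{f:tour-5} identifies $H_2$ and $H_2^T$ as the source and sink extensions of $X_4$, and $H_3$ as the self-transpose tournament obtained by surrounding a $C_3$ with both a source and a sink (equivalently, the source extension of $D^T$). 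In each case a source or a sink is present.

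The main obstacle is purely bookkeeping: correctly matching each named five-vertex tournament to its source/sink extension and confirming that none of the six is accidentally missing both. I expect the cleanest route to be an out-degree-sequence argument, since a source forces an out-degree of $4$ and a sink forces an out-degree of $0$; consequently a tournament has a source or a sink if and only if its out-degree sequence contains a $4$ or a $0$. This makes the verification visible from the score sequence alone, and the out-degree sequences of $T_5, H_1, H_1^T, H_2, H_2^T, H_3$ are pairwise distinct and distinct from the degenerate groups $\{H_4,H_5\}$ and $\{H_6,H_7,H_8\}$ noted after Figure \ref{f:tour-5}, so the matching is a short finite check that sidesteps any need to inspect the figure in detail.
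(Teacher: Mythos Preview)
Your proposal is correct and follows exactly the paper's approach: apply Lemma~\ref{l:count-extension} together with $c(4)\le\omega$ from Theorem~\ref{t:four}, after observing that each of the six listed tournaments has a source or a sink. The paper dispatches that structural observation in a single sentence (relying on Figure~\ref{f:tour-5}), whereas you supply extra justification via Table~\ref{table:1} and score sequences, but the argument is the same.
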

\begin{proof}
	Observe that each of the listed tournaments has a dominating vertex or a dominated vertex.
	The corollary now follows from Lemma \ref{l:count-extension} by recalling that $c(4) \le \omega$,
	as shown in Theorem \ref{t:four}.
\end{proof}
Consider the sub-matrix of $A$ consisting only of the columns {\em not} corresponding to those listed
in Corollary \ref{coro:1}. This is a $20 \times 6$ matrix but it is still not of full column rank; its rank is only $5$. Fortunately, there is another tournament, in fact it is $H_8$, for which we can compute
$H_8$-COUNT differently.
\begin{lemma}\label{l:H_8}
	$c(H_8) \le \omega+1$.
\end{lemma}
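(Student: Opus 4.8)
The plan is to exploit a distinguished vertex of $H_8$ on which we can anchor, reducing the five-vertex count to a \emph{colored} four-vertex count that we already know how to carry out quickly. Unlike the six tournaments handled in Corollary~\ref{coro:1}, $H_8$ has neither a source nor a sink, so Lemma~\ref{l:count-extension} does not apply. However (see Figure~\ref{f:tour-5}), $H_8$ is the tournament on $\{e,a,b,c,d\}$ in which $e$ is the unique vertex of out-degree one, its single out-neighbor $a$ dominates the remaining three vertices $b,c,d$, these three induce a directed triangle, and each of them beats $e$; equivalently $e\to a\to\{b,c,d\}\to e$ with $\{b,c,d\}$ a $C_3$. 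Because $e$ is the unique out-degree-one vertex, in every copy of $H_8$ in $G$ the vertex playing the role of $e$ is uniquely determined, which is exactly what lets the anchor avoid overcounting.

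First I would loop over all $w\in V(G)$, letting $w$ play the role of $e$, and put $P=N^+(w)$ and $Q=N^-(w)$. The four other vertices of such a copy must be a vertex $u\in P$ (the image of $a$) together with a directed triangle $\{x,y,z\}\subseteq Q$ (the images of $b,c,d$) with $u\to x,y,z$; that is, they induce a copy of the four-vertex tournament $D$ whose source lies in $P$ and whose triangle lies in $Q$. Conversely, every such colored copy of $D$ extends $w$ to a copy of $H_8$, and since a directed triangle is a single subgraph (its $\mathbb{Z}_3$ rotation being internal to the copy) each copy of $H_8$ arises exactly once. Hence
\[
\#H_8=\sum_{w\in V(G)} D_w,
\]
where $D_w$ is the number of copies of $D$ with source in $N^+(w)$ and triangle in $N^-(w)$.

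It then remains to compute each $D_w$ in $O(n^\omega)$ time, which yields the claimed $O(n^{\omega+1})$ bound after summing over the $n$ choices of $w$. For this I would replay the degree/path computation of Lemma~\ref{l:four-count} inside the bipartition $(P,Q)$: for $u\in P$ and $v\in Q$ with $u\to v$, the restricted quantities $\tilde d^+(u,v)=|N^+(u)\cap N^+(v)\cap Q|$ and $\tilde p(u,v)=|\{x\in Q:u\to x\to v\}|$ are, for all such pairs simultaneously, entries of products of the relevant submatrices of $A^+$ and $A^-$, and are therefore computable in $O(n^\omega)$ time for each fixed $w$. A suitable combination of $\sum \tilde d^+(u,v)\,\tilde p(u,v)$ and $\sum\binom{\tilde d^+(u,v)}{2}$ — namely the colored analogues of \eqref{e:t4} and \eqref{e:t4+d}, with their correct multiplicities — then isolates the number of colored copies of $D$, i.e. $D_w$.

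The main obstacle, and the point that needs care, is the colored reduction itself. At first glance, counting directed triangles in $Q$ that share a common dominator in $P$ looks like it demands a triple common-neighborhood computation, which does not obviously reduce to matrix multiplication. The crucial observation — precisely the mechanism behind Lemma~\ref{l:four-count} — is that fixing the source-to-triangle pair $(u,v)$ and encoding the two remaining triangle vertices through the bilinear quantities $\tilde d^+(u,v)$ and $\tilde p(u,v)$ forces $u$ to dominate all three triangle vertices automatically, so that only pairwise (hence matrix-multiplication-computable) data is needed. Confirming the exact right-hand sides of the colored analogues of \eqref{e:t4} and \eqref{e:t4+d}, and checking that the automorphisms of $H_8$ and of $D$ are accounted for so that each copy receives weight one, is the remaining delicate but routine bookkeeping.
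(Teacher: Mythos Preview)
Your proposal rests on an incorrect identification of $H_8$. The tournament you describe --- a vertex $e$ of out-degree one whose sole out-neighbour $a$ dominates a directed triangle $\{b,c,d\}$, each of which beats $e$ --- is not $H_8$. A quick invariant settles this: in your tournament the unique out-degree-$1$ vertex beats the unique out-degree-$3$ vertex, whereas in $H_8$ the out-degree-$3$ vertex beats the out-degree-$1$ vertex (in the paper's labelling $a\to d$, and the unique in-neighbour $e$ of $a$ has out-degree~$2$, not~$1$). Consequently, if you anchor on the out-degree-$1$ vertex of the genuine $H_8$ and delete it, what remains is $X_4$, not $D$; your colored-$D$ count therefore enumerates copies of a different five-vertex tournament (one of the others with score sequence $(3,2,2,2,1)$), not $H_8$.

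The paper's argument is structurally different from your anchoring. It fixes not the degree-$1$ vertex but the degree-$2$ vertex $e$ distinguished by the following property, \emph{unique to $H_8$ among all five-vertex tournaments}: both out-neighbours of $e$ dominate both in-neighbours of $e$. For each $v\in V(G)$ one forms the undirected bipartite graph on parts $N^+(v)$, $N^-(v)$, retaining only the edges of $G$ directed from $N^+(v)$ to $N^-(v)$; every undirected $4$-cycle in this bipartite graph corresponds bijectively to a copy of $H_8$ with $v$ playing the role of $e$ (the two remaining orientations inside the copy are absorbed by automorphisms of $H_8$). Since $4$-cycles in an $n$-vertex graph can be counted in $O(n^\omega)$ time, summing over $v$ gives the $O(n^{\omega+1})$ bound. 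Your anchoring idea might be salvageable by replacing the colored-$D$ count with a correctly set up colored-$X_4$ count, but that is a different computation from what you wrote and would need its own verification.
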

\begin{proof}
	We label the vertices of $H_8$ as follows. Let $a$ be the vertex with out-degree $3$. Let $d$ be the
	vertex with out-degree $1$. Let $e$ be the unique in-neighbor of $a$.
	Let $b$ be the other out-neighbor of $e$ and let $c$ be the remaining vertex. Observe that $e$ has precisely two out-neighbors ($a$ and $b$) and precisely two in-neighbors
	($c$ and $d$) and that each out-neighbor of $e$ dominates each in-neighbor of $e$.
	It is important to observe that this situation (i.e. of having two out-neighbors and two in-neighbors and both out-neighbors dominate both in-neighbors) is {\em unique} for $H_8$ and its vertex $e$.
	It does not hold in any other tournament on five vertices.
	Now, suppose $G$ is a given tournament for which we should solve $H_8$-COUNT.
	For each vertex $v \in V(G)$, let $A_v$ be the the set of out-neighbors of $v$ and let
	$B_v$ be the set of in-neighbors of $v$. Construct an undirected bipartite graph with sides $A_v,B_v$
	keeping (as undirected edges) only the edges of $G$ going from $A_v$ to $B_v$.
	Observe that every undirected four-cycle in $G$, say on vertices $u_1,u_2 \in A_v$ and $w_1,w_2 \in B_v$
	uniquely corresponds to a copy of $H_8$ in $G$. Indeed, this can be seen by the mapping
	sending $e$ to $v$, sending $a$ and $b$ to $u_1$ and $u_2$ (it does not matter which is mapped to which as
	reversing the edge $(a,b)$ in $H_8$ is an automorphism), and sending $c$ and $d$ to
	$w_1$ and $w_2$ (it does not matter which is mapped to which as
	reversing the edge $(c,d)$ in $H_8$ is an automorphism).
	Now, the number of undirected four-cycles in a graph with $O(n)$ vertices can be computed in $O(n^\omega)$ time, see \cite{AYZ-1997}. Performing this count for each $v \in V(G)$ and summing the outcomes gives the number of $H_8$ in $G$ in $O(n^{\omega+1})$ time.
\end{proof}

By Lemma \ref{l:H_8} and Corollary \ref{coro:1}, it only remains to show how to solve $T$-COUNT for
$T \in \{H_4, H_5, H_6, H_7, R_5 \}$, given the counts of $T \in \{T_5, H_1, H_1^T, H_2, H_2^T, H_3, H_8\}$
and given the values in the r.h.s. of Table \ref{table:1}. This can be done in constant time since the sub-matrix of $A$ on the columns $\{H_4, H_5, H_6, H_7, R_5 \}$ has full column rank. Indeed, to see this just observe that its square sub-matrix on rows $3,5,8,15,18$ being
	\[ \begin{pmatrix}	1 & 0 & 1 & 0 & 0 \\ 1 & 0 & 0 & 0 & 0 \\ 1 & 1 & 0 & 0 & 0 \\ 0 & 1 & 0 & 1 & 0 \\
	0 & 0 & 3 & 2 & 5 \end{pmatrix} \]
is non-singular. \qed

\section*{Acknowledgment}
The author thanks Lior Gishboliner for useful comments.


\begin{thebibliography}{10}
	
	\bibitem{AV-2021}
	J.~Alman and V.~Vassilevska~Williams.
	\newblock A refined laser method and faster matrix multiplication.
	\newblock In {\em Proceedings of the 2021 Annual ACM-SIAM Symposium on Discrete
		Algorithms (SODA)}, pages 522--539. SIAM, 2021.
	
	\bibitem{AGHP-1992}
	N.~Alon, O.~Goldreich, J.~H{\aa}stad, and R.~Peralta.
	\newblock Simple constructions of almost $k$-wise independent random variables.
	\newblock {\em Random Structures \& Algorithms}, 3(3):289--304, 1992.
	
	\bibitem{AYZ-1995}
	N.~Alon, R.~Yuster, and U.~Zwick.
	\newblock Color coding.
	\newblock {\em Journal of the ACM}, 42(4):844--856, 1995.
	
	\bibitem{AYZ-1997}
	N.~Alon, R.~Yuster, and U.~Zwick.
	\newblock Finding and counting given length cycles.
	\newblock {\em Algorithmica}, 17(3):209--223, 1997.
	
	\bibitem{BG-2008}
	J.~Bang-Jensen and G.~Z. Gutin.
	\newblock {\em Digraphs: theory, algorithms and applications}.
	\newblock Springer Science \& Business Media, 2008.
	
	\bibitem{BGLSS-2022}
	S.~K. Bera, L.~Gishboliner, Y.~Levanzov, C.~Seshadhri, and A.~Shapira.
	\newblock Counting subgraphs in degenerate graphs.
	\newblock {\em Journal of the ACM}, 69(3):1--21, 2022.
	
	\bibitem{BPS-2020}
	S.~K. Bera, N.~Pashanasangi, and C.~Seshadhri.
	\newblock Linear time subgraph counting, graph degeneracy, and the chasm at
	size six.
	\newblock In {\em Proceedings of the 11th Annual Innovations in Theoretical
		Computer Science Conference (ITCS)}. Schloss Dagstuhl-Leibniz-Zentrum f{\"u}r
	Informatik, 2020.
	
	\bibitem{bressan-2019}
	M.~Bressan.
	\newblock Faster subgraph counting in sparse graphs.
	\newblock In {\em Proceedings of the 14th International Symposium on
		Parameterized and Exact Computation (IPEC 2019)}, 2019.
	
	\bibitem{CN-1985}
	N.~Chiba and T.~Nishizeki.
	\newblock Arboricity and subgraph listing algorithms.
	\newblock {\em SIAM Journal on Computing}, 14(1):210--223, 1985.
	
	\bibitem{CG-1991}
	F.~R.~K. Chung and R.~L. Graham.
	\newblock Quasi-random tournaments.
	\newblock {\em Journal of Graph Theory}, 15(2):173--198, 1991.
	
	\bibitem{CR-2017}
	L.~Coregliano and A.~Razborov.
	\newblock On the density of transitive tournaments.
	\newblock {\em Journal of Graph Theory}, 85(1):12--21, 2017.
	
	\bibitem{CDM-2017}
	R.~Curticapean, H.~Dell, and D.~Marx.
	\newblock Homomorphisms are a good basis for counting small subgraphs.
	\newblock In {\em Proceedings of the 49th Annual ACM Symposium on Theory of
		Computing (STOC)}, pages 210--223, 2017.
	
	\bibitem{CM-2014}
	R.~Curticapean and D.~Marx.
	\newblock Complexity of counting subgraphs: Only the boundedness of the
	vertex-cover number counts.
	\newblock In {\em Proceedings of the 55th Annual IEEE Symposium on Foundations
		of Computer Science (FOCS)}, pages 130--139. IEEE, 2014.
	
	\bibitem{DVW-2019}
	M.~Dalirrooyfard, T.~D. Vuong, and V.~V. Williams.
	\newblock Graph pattern detection: {H}ardness for all induced patterns and
	faster non-induced cycles.
	\newblock In {\em Proceedings of the 51st Annual ACM Symposium on Theory of
		Computing (STOC)}, pages 1167--1178, 2019.
	
	\bibitem{EG-2004}
	F.~Eisenbrand and F.~Grandoni.
	\newblock On the complexity of fixed parameter clique and dominating set.
	\newblock {\em Theoretical Computer Science}, 326(1-3):57--67, 2004.
	
	\bibitem{FKLL-2015}
	P.~Floderus, M.~Kowaluk, A.~Lingas, and E.-M. Lundell.
	\newblock Detecting and counting small pattern graphs.
	\newblock {\em SIAM Journal on Discrete Mathematics}, 29(3):1322--1339, 2015.
	
	\bibitem{gishboliner}
	L.~Gishboliner.
	\newblock Private communication.
	
	\bibitem{GLSY-2022}
	L.~Gishboliner, Y.~Levanzov, A.~Shapira, and R.~Yuster.
	\newblock Counting homomorphic cycles in degenerate graphs.
	\newblock In {\em Proceedings of the 2022 Annual ACM-SIAM Symposium on Discrete
		Algorithms (SODA)}, pages 417--430. SIAM, 2022.
	
	\bibitem{HM-1966}
	F.~Harary and L.~Moser.
	\newblock The theory of round robin tournaments.
	\newblock {\em The American Mathematical Monthly}, 73(3):231--246, 1966.
	
	\bibitem{IR-1978}
	A.~Itai and M.~Rodeh.
	\newblock Finding a minimum circuit in a graph.
	\newblock {\em SIAM Journal on Computing}, 7:413--423, 1978.
	
	\bibitem{KKM-2000}
	T.~Kloks, D.~Kratsch, and H.~M{\"u}ller.
	\newblock Finding and counting small induced subgraphs efficiently.
	\newblock {\em Information Processing Letters}, 74(3-4):115--121, 2000.
	
	\bibitem{KLL-2013}
	M.~Kowaluk, A.~Lingas, and E.-M. Lundell.
	\newblock Counting and detecting small subgraphs via equations.
	\newblock {\em SIAM Journal on Discrete Mathematics}, 27(2):892--909, 2013.
	
	\bibitem{liu-2012}
	G.~Liu.
	\newblock Various theorems on tournaments.
	\newblock {\em arXiv preprint arXiv:1207.0237}, 2012.
	
	\bibitem{lovasz-1993}
	L.~Lov\'asz.
	\newblock {\em Combinatorial problems and exercises,}.
	\newblock North-Holland Publishing Co., 2nd edition, 1993.
	
	\bibitem{NN-1993}
	J.~Naor and M.~Naor.
	\newblock Small-bias probability spaces: Efficient constructions and
	applications.
	\newblock {\em SIAM Journal on Computing}, 22(4):838--856, 1993.
	
	\bibitem{NP-1985}
	J.~Ne{\v{s}}et{\v{r}}il and S.~Poljak.
	\newblock On the complexity of the subgraph problem.
	\newblock {\em Commentationes Mathematicae Universitatis Carolinae},
	26(2):415--419, 1985.
	
	\bibitem{PV-1990}
	J.~Plehn and B.~Voigt.
	\newblock Finding minimally weighted subgraphs.
	\newblock In {\em Proceedings of the 16th International Workshop on
		Graph-Theoretic Concepts in Computer Science (WG)}, pages 18--29. Springer,
	1990.
	
	\bibitem{stearns-1959}
	R.~Stearns.
	\newblock The voting problem.
	\newblock {\em The American Mathematical Monthly}, 66(9):761--763, 1959.
	
	\bibitem{VWWY-2014}
	V.~Vassilevska~Williams, J.~R. Wang, R.~R. Williams, and H.~Yu.
	\newblock Finding four-node subgraphs in triangle time.
	\newblock In {\em Proceedings of the 26th annual ACM-SIAM symposium on discrete
		algorithms (SODA)}, pages 1671--1680. SIAM, 2015.
	
	\bibitem{VW-2013}
	V.~Vassilevska~Williams and R.~R. Williams.
	\newblock Finding, minimizing, and counting weighted subgraphs.
	\newblock {\em SIAM Journal on Computing}, 42(3):831--854, 2013.
	
	\bibitem{VW-2018}
	V.~Vassilevska~Williams and R.~R. Williams.
	\newblock Subcubic equivalences between path, matrix, and triangle problems.
	\newblock {\em Journal of the ACM}, 65(5):1--38, 2018.
	
	\bibitem{YZ-1997}
	R.~Yuster and U.~Zwick.
	\newblock Finding even cycles even faster.
	\newblock {\em SIAM Journal on Discrete Mathematics}, 10(2):209--222, 1997.
	
\end{thebibliography}
\end{document}